\newtheorem{thm}{Theorem}
\newtheorem{lem}[thm]{Lemma}
\newtheorem{prop}{Proposition}
\newtheorem{remark}{Remark}
\theoremstyle{definition}
\begin{document}

%


\title{\LARGE   \hspace{50 cm}A Multiclass Mean-Field Game for Thwarting Misinformation Spread in the Internet of Battlefield Things (IoBT) \vspace{-0.5 ex} }

\author[D. Kopta et al.]
{\normalsize Nof Abuzainab and Walid Saad
	\\ \vspace{-0.4 cm}
	Wireless@VT, Department of Electrical and Computer Engineering, Virginia Tech, Blacksburg, VA, USA, Emails:\{nof, walids\}@vt.edu\\ 
	\vspace{-5ex}
\thanks{This research was sponsored by the Army Research Laboratory and was
accomplished under Grant Number W911NF-17-1-0021. }
\thanks{A preliminary version of this work \cite{confc}  was submitted for conference publication.}
}

\maketitle


%
\IEEEpeerreviewmaketitle
\vspace{-0.8 cm}
\begin{abstract}
In this paper, the problem of misinformation propagation is studied for an Internet of Battlefield Things (IoBT) system in 
which an attacker seeks to inject false information in the IoBT nodes in order to compromise the IoBT operations. In the considered model, each IoBT node seeks to counter the misinformation attack by finding the optimal probability of accepting a given information that minimizes its cost at each time instant. The cost is expressed in terms of the quality of information received as well as the infection cost. The problem is formulated as a mean-field game with multiclass agents which is suitable to model a massive heterogeneous IoBT system. For this game, the mean-field equilibrium is characterized, and an algorithm based on the forward backward sweep method is proposed to find the mean-field equilibrium. Then, the finite IoBT case is considered, and the conditions of convergence of the equilibria in the finite case to the mean-field equilibrium are presented. Numerical results show that the proposed scheme can achieve a 1.2-fold increase in the quality of information (QoI) compared to a baseline scheme in which the IoBT nodes are always transmitting. The results also show that the proposed scheme can reduce the proportion of infected nodes  by $99\%$ compared to the baseline.

\end{abstract}
\section{Introduction}

With the advent of the Internet of Things (IoT), next-generation military networks will rely more on machine intelligence and the information collected from the densely deployed IoT devices \cite{IoBT,IoBTtwo,IoBTthree}. The integration of military networks with the various IoT devices will potentially achieve battlefield autonomy and increase considerably the efficiency of battlefield operations, thus forming the so-called the \emph{Internet of battlefield Things (IoBT)}\cite{IoBTone}. However, due to its adversarial nature, the IoBT is prone to a multitude of security attacks. One important attack on IoBT is the misinformation attack \cite{IoBT} in which an adversary injects false information at each IoBT device. Such misinformation can then be used by the adversary to manipulate the decisions of the military commanders, in an effort to jeopardize the success of the military mission.  Thus, realizing the vision of a large-scale IoBT is largely contingent on developing novel security mechanisms to combat misinformation propagation across the various IoBT nodes.

The dynamics of misinformation propagation have been recently modeled using epidemic models for social networks in \cite{social} and for mobile  opportunistic networks such as those encountered in the battlefield in \cite{mobility}.
Epidemic models are suitable for IoBT misinformation propagation due to the presence of strong interactions among the densely deployed IoBT devices. This dense nature of the IoBT implies that an IoBT device can get easily infected with misinformation whenever it communicates with any one of its infected neighbors. Further, epidemic models can capture systems with infinite number of nodes, which is suitable for the large-scale IoBT systems.

Many existing works have considered the problem of controlling the spread of network epidemics and studied the interaction between the network and the adversary using game-theoretic approaches \cite{epidemicgame2, epidemicgame3, epidemicgame7, epidemicgame1, epidemicgame0, epidemicgame4, epidemicgame5,epidemicgame6, epidemicgame9, mean-fieldSIR}.
In \cite{epidemicgame2, epidemicgame3, epidemicgame7}, a noncooperative game is considered in which the players are the network nodes whose goal is to choose a curing rate that minimizes the protection cost as well as the infection costs at steady-state.  In \cite{epidemicgame1}, several noncooperative games are proposed for network epidemic control between a network operator and an attacker with the goal of minimizing the infection cost. A zero-sum differential game  is proposed  in  \cite{epidemicgame0} and \cite{epidemicgame4} for network malware propagation in which the network operator controls the recovery rates of the sensor nodes whereas an attacker chooses the infection rate that maximizes the infection cost. The work in \cite{epidemicgame0} particularly considers a wireless sensor network in which the network operator controls the sleep rate of the sensor nodes in addition to the recovery rate in order to limit the spread of infections. 
The authors in \cite{epidemicgame5} propose a network formation game  in which the network nodes choose to construct links starting from an empty network in order to reach a connected, steady-state network while minimizing the costs of infection. 
 In \cite{epidemicgame6} and \cite{epidemicgame9}, the problem of controlling the network epidemic through vaccination is formulated as a zero-determinant game where both the network administrator and the nodes are the players. In \cite{mean-fieldSIR}, a mean-field game is proposed to study infection spread in a fully connected regular network.

However, most of this prior art \cite{epidemicgame2, epidemicgame3, epidemicgame7, epidemicgame1, epidemicgame0, epidemicgame4, epidemicgame5,epidemicgame6, epidemicgame9, mean-fieldSIR} models the network as either a fully connected graph or as a $k$-regular graph. However, in an IoBT, the nodes have heterogeneous connectivty, and, thus, there is a need to consider more suitable graph models that account for the IoBT heterogeneity.  Also, considering the network operator as the sole network player as done in \cite{epidemicgame1, epidemicgame0, epidemicgame4} is not suitable for the IoBT since it requires a centralized control over all of the IoBT nodes and therefore significant time and control overheads, which is not tolerable in time-sensitive military missions. Hence, distributed approaches are more favorable for the IoBT since the nodes must instanteneously take control to limit  misinformation propagation. Moreover, existing works, such as in\cite{epidemicgame2} and \cite{epidemicgame3}, that consider the network nodes as the players typically seek to maximize the payoff when the system is at the steady state. Such approaches are not suitable for the problem of misinformation propagation in IoBT. This is due to the fact that information propagation in the IoBT is time sensitive. Thus, in order to maintain the successful operation of the IoBT, it is critical to limit the spread of misinformation at  each time instant and not only at the steady state. In addition, choosing only the curing rate, as done by most of the existing works \cite{epidemicgame2, epidemicgame3, epidemicgame7, epidemicgame1, epidemicgame0, epidemicgame4, epidemicgame5,epidemicgame6, epidemicgame9, mean-fieldSIR}, is not adequate to instanteneously limit the spread of misinformation. In fact, curing the nodes comes at the expense of security costs. Thus, there is a need to implement cost-efficient actions that can effectively limit the spread of misinformation. Here, it is also worth noting that, recently, a number of works \cite{MIoT1, MIoT2, MIoT3, MIoT4} have studied various IoBT security scenarios, however, these works do not analyze the critical problem of misinformation spread. 

The main contribution of this paper is a novel, comprehensive framework for thwarting the spread of misinformation in a large-scale IoBT system. In particular, the proposed framework will yield the following key contributions:
\begin{itemize}
\item We propose a novel approach to control misinformation propagation in the IoBT. In particular, we propose a distributed approach in which each IoBT node decides whether or not to accept the received information at each time instant, in order to to limit the propagation of misinformation. Thus, our proposed approach, due to its distributed nature, is scalable for a large-scale system such as the IoBT. Further, due to the heterogeneity of the IoBT nodes in terms of connectivity, we model the IoBT as a random graph in which the nodes have heterogeneous degrees that follow a predetermined distribution.
\item We consider an epidemic model suitable for IoBT misinformation propagation that accounts for the heterogeneous characteristics of the IoBT nodes to effectively identify misinformation. 
In particular, we consider an SELI epidemic model for the IoBT in which, in addition to the conventional susceptible ($S$) and infected ($I$) states of the nodes, we introduce latent ($L$) and exposed ($E$) states to capture scenarios in which the IoBT nodes choose to perform further processing to check the validity of any received information.
\item We formulate the IoBT misinformation propagation problem as a finite-state mean-field game \cite{finitemean-field} with multiclass agents  \cite{multiclass} whose players are the IoBT nodes each of which is seeking to determine its probability of accepting the received information. Mean-field games \cite{meanw} are suitable for our problem since they handle an infinite number of players which is the case for a large scale IoBT. 
Further, the proposed framework of mean-field games with \emph{multiclass agents} can capture the presence of several types of populations where agents belonging to the same type have similar characteristics.
Thus, such games \cite{multiclass} are suitable to model the heterogeneous characteristics of the IoBT nodes, unlike conventional mean-field games \cite{mean-fieldSIR} that assume all players to be similar. To the best of our knowledge, there is no prior work that combines mean-field games with multiclass agents, as proposed here.
\item We consider a suitable metric for misinformation propagation known as the quality-of-information (QoI) in the  IoBT nodes' payoff in addition to the infection cost.  The QoI of each node is defined as a function of the information received from its neighbors, the integrity of the received information and the age of information.

\item We extend the definition of the finite state mean-field equilibrium (MFE) that is defined in \cite{finitemean-field} to the case of multiclass agents and propose an algorithm based on the forward backward sweep method \cite{FBSM} to find the MFE. Then, we analyze the case of a finite IoBT game and prove the convergence of the  equilibrium of the finite game to the MFE. This result, in turn, shows that the MFE is an effective approximation to a real-world IoBT system with a large number of players.
\item Numerical results show that the proposed scheme can achieve a 1.2-fold increase in the quality of information (QoI) compared to a baseline scheme in which the IoBT nodes are always transmitting. Further, the proposed scheme can reduce the proportion of infected nodes  by $99\%$ compared to the baseline.
\end{itemize}
\vspace{-1 cm}
The rest of the paper is organized as follows: Section II presents the system model. Section III present the IoBT mean-field game. Section IV presents the finite IoBT game and the convergence conditions of the finite game to the mean-field game. Section V presents the simulation results. Finally, conclusions are drawn in section VI.

\section{System Model}
\vspace{-0.1 cm}
\subsection{Epidemic  Model}
Consider an IoBT network modeled by a random graph whose node degrees are distributed according to a distribution $P$, with $P(k)$ being the probability that an IoBT node has a degree $k$. We let $K_{\max}$ be the maximum degree in the IoBT system. The random graph is a realistic model for a large-scale IoBT due to the heterogeneity in the connectivity of the IoBT nodes. In fact, the IoBT network comprises IoBT devices that are locally connected to a cluster head, and sinks/fusion centers that are connected to a multitude of cluster heads and various IoBT devices.  Further, by properly choosing the degree distribution $P$, the random graph can represent a tree-like structure which commonly represents the topology of military networks \cite{treelike}. In the considered IoBT graph, nodes having degree $k$ are classified into types within a set $\mathcal{H}_k$ and distributed according to $p_k$, where $p_k(i)$ is the probability that an node of degree $k$ has type $i \in \mathcal{H}_k$. The existence of multiple types of nodes having a degree $k$ stems from the heterogeneity of the IoBT nodes that include simple sensors, wearables, vehicles, cameras, and robots or drones that have different capabilities, characteristics, and roles.


In this IoBT, an attacker seeks to inject false information into the nodes in order to disrupt the normal operations of the system. Let $\lambda_{ik}$ be the rate of injection of false information into a node of degree $k$ and type $i$. At any given time instant $t$, each IoBT node chooses to either accept the received information and then transmit it or to doubt the integrity of this received information. An IoBT node becomes infected once it accepts false information. When the IoBT node chooses to doubt the information, it retains the information for some time for further inspection. For instance, it can potentially run a classification machine learning algorithm to decide whether to forward or discard the stored information. Finally, an infected node no longer uses the misinformation when it becomes obsolete and, hence, this node goes back to being susceptible to attacks.
Thus, each one of the IoBT nodes can be in one of the following states:
\begin{itemize}
\item \emph{Susceptible (S)}: A node is said to be susceptible when it does not contain misinformation but, simultaneously, it does not have strong security mechanisms to identify misinformation. Hence, it can get infected with misinformation either when it accepts information forwarded from an infected node or when the attacker succeeded in injecting  misinformation directly into the designated susceptible node.
\item \emph{Exposed (E)}: A node is said to be exposed when it receives misinformation from its neighbour. Yet, it is doubtful about the credibility of the received information, and, hence, it does not immediately forward the information to its neighbour.
\item \emph{Latent (L)}: A node is said to be latent when it receives true information. However, it still decides to do further processing to inspect the received information.
\item \emph{Infected (I)}: A node is said to be infected when it contains misinformation which it believes it is correct, and, subsequently, it forwards the misinformation to its neighbour.
\end{itemize}

\vspace{-0.3 cm}
The SEI model and its variants have been been commonly adopted as a realistic model to analyze misinformation propagation in networks (e.g. see \cite{social, mobility}). However, existing models do not consider the case when the information which a node doubts is true, which is different from the case when the node doubts misinformation. This is due to the fact that the probability of accepting information after processing generally depends on whether the information is true or not.
Thus, we introduce the latent (\emph{L}) state to represent tin which the IoBT nodes analyze true information. Further, existing models such as those in \cite{mobility} do not explicitly take into account the delay incurred when a node decides to analyze its received information. In contrast, in our model, we account for the processing delay in the latent and the exposed states by the probabilities of residing in states $E$ and $L$, respectively.
\begin{figure}
\begin{center}
\begin{tikzpicture}[->, >=stealth', auto, semithick, node distance=4.5 cm]
\tikzstyle{every state}=[fill=white,draw=black,thick,text=black,scale=1]
\node[state]    (A)                     {$S$};
\node[state]    (B)[above right of=A]   {$E$};
\node[state]    (C)[below right of=A]   {$L$};
\node[state]    (D)[below right of=B]   {$I$};
\path
(A) 
    edge[bend left]     node{$\sigma_{ik}(t)R_{ik}(\Theta(t))$}     (B)
    edge [bend left, above]    node{$\hspace{0.4 cm}\alpha_{ik}(t)R_{ik}(\Theta(t))$}      (D)
    edge   node{$\sigma_{ik}(t)L_{ik}(\Theta(t))$}      (C)
(B) edge[bend left]               node{$\beta^E_{ik}(1-\delta_{ik})$}           (D)
      edge [above]              node{$\hspace{2.8 cm} \gamma^E_{ik}(1-\delta_{ik})$}           (A)
(C) 
      edge[bend left] node{$(\beta^L_{ik}+\gamma^L_{ik})(1-\delta_{ik})$} (A)
(D) 
    edge   node{$\nu_{ik}$}         (A)    
;
\end{tikzpicture}
\end{center}
 \caption{State transition diagram of IoBT node of degree $k$ and type $i$}\label{StateTrans}
\vspace{-0.2 cm}
\end{figure}
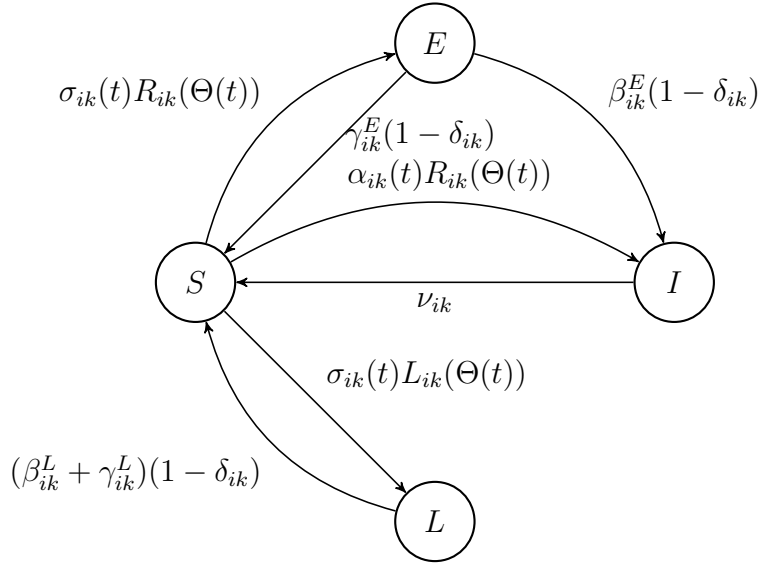
Fig. \ref{StateTrans} shows the state transition diagram of each IoBT node of degree $k$ and type $i$. 
When an IoBT node of degree $k$ and type $i$ is susceptible and receives information, it will either accept the information with probability $\alpha_{ik}(t)$ and become infected or it will doubt the information with probability $\sigma_{ik}(t)=1-\alpha_{ik}(t)$. The IoBT node receives misinformation either from infected neighbours or directly from the attacker. Based on \cite{hetero}, the probability with which a node with degree $k$ is infected by one of its neighbours is derived as $k\Theta(t)$ where $\Theta(t)$ is the probability that a randomly chosen link is pointing to an infected node and is given by

\begin{equation}
\Theta(t)=\frac{\sum_k kP(k)\sum_{i \in \mathcal{H}_k}p_k(i) I_{ik}(t)}{<k>}, \label{theta}
\end{equation}
where $<k>=\sum_{k}kP(k)$ and $I_{ik}(t)$ is the proportion of infected devices of degree $k$ and type $i$. Thus, since the attacker directly infects the node with rate $\lambda_{ik}$, the total infection rate is $\alpha_{ik}(t)R_{ik}(\Theta(t))$ where $R_{ik}(\Theta(t))=\lambda_{ik}+k\Theta(t)$. 
Similarly, when the IoBT node is susceptible and receives true information,  it becomes latent with probability $\sigma_{ik}(t)L_{ik}(\Theta(t))$ or remains susceptible with probability $\alpha_{ik}(t)L_{ik}(\Theta(t))$, where $L_{ik}(\Theta(t))=(1-\lambda_{ik})(1-\Theta(t))^k$ is the probability that an IoBT node does not receive misinformation at time $t$.

 When an IoBT node is in the exposed state, it remains in this state with probability $\delta_{ik}$. The probability $\delta_{ik}$ corresponds to the delay incurred to judge the credibility of the obtained information. Then, the node will accept the information with probability $\beta^E_{ik}$ and become infected, or it will refuse the misinformation and return back to the susceptible state with probability $\gamma^E_{ik}$.
Similarly, when the node is in the latent state, it remains in this state in order to process the information with probability $\delta_{ik}$,
 it accepts the information with probability $\beta^L_{ik}$ or rejects it with probability $\gamma^L_{ik}$. When in the latent state, the IoBT node will return to the susceptible state whether it decides to accept or reject the information (i.e. with probability $\beta^L_{ik}+\gamma^L_{ik}$).
The probabilities  $\delta_{ik}$, $\beta^E_{ik}$, $\gamma^E_{ik}$, $\beta^L_{ik}$ and $\gamma^L_{ik}$   depend on the node's capabilities (for example the strength of the machine learning algorithm used) and its effectiveness in identifying the misinformation.
Finally, an IoBT node of degree $k$ and type $i$ discards the misinformation when it is no longer useful after some time with probability $\nu_{ik}$.

 Let $m^S_{ik}(t)$, $m^{E}_{ik}(t)$, $m^{L}_{ik}(t)$, and $m^{I}_{ik}(t)$ be the proportions of IoBT nodes of degree $k$ and type $i$ in states $\emph{S}$, $\emph{E}$,  $\emph{L}$, and $\emph{I}$, respectively. Since IoBT networks typically have a massive number of devices, we consider a mean-field epidemic model. Mean-field models provide a simple yet effective representation of large and complex interacting system of agents.  Mean-field models mainly study decision making processes in which the number of agents tends to infinity, and that the dynamics of agents of similar characteristics can be described by an aggregate behavior. Further, in mean-field models since the number of agents tend to infinity, the influence of a single agent on the overall system is negligible, yet the effect of all agents is considerable and approximated by their average effect.
Thus, at the mean-field level, the state dynamics of IoBT nodes of degree $k$ and type $i$ are governed by the following Kolmogorov differential equations:
\vspace{-0.6 cm}

\small
\begin{eqnarray}
&&\hspace{-1.3 cm}\frac{\partial m^{S}_{ik}(t)}{\partial t}=\hspace{-0.1 cm}-(1-\bar{\alpha}_{ik}(t)+\bar{\alpha}_{ik}(t) R_{ik}(\Theta(t))m^{S}_{ik}(t)+(1-\delta_{ik})\gamma^E_{ik}m^E_{ik}+(1-\delta_{ik})m^L_{ik}+\nu_{ik}m^I_{ik},\label{ms}
\end{eqnarray}
\begin{eqnarray}
&&\hspace{-7 cm}\frac{\partial m^{E}_{ik}(t)}{\partial t}=-(1-\delta_{ik})m^{E}_{ik}+(1-\bar{\alpha}_{ik}(t))R_{ik}(\Theta(t))m^{S}_{ik}(t),\label{me}
\end{eqnarray}
\vspace{-0.6 cm}
\begin{eqnarray}
&&\hspace{-7 cm}\frac{\partial m^{L}_{ik}(t)}{\partial t}=-(1-\delta_{ik})m^{L}_{ik}+(1-\bar{\alpha}_{ik}(t))L_{ik}(\Theta(t))m^{S}_{ik}(t),\label{ml}
\end{eqnarray}
\begin{eqnarray}
&&\hspace{-5.8 cm}\frac{\partial m^{I}_{ik}(t)}{\partial t}= \bar{\alpha}_{ik}(t) R_{ik}(\Theta(t))m^{S}_{ik}(t)+(1-\delta_{ik})\beta^E_{ik}m^E_{ik}-\nu_{ik}m^I_{ik}(t),
\label{mI}
\end{eqnarray}
\normalsize
where $\bar{\alpha}^{S}_{ik}(t)$ is the aggregate rate of accepting information for all nodes with degree $k$ and type $i$ when in the susceptible state. At time $0$, all the nodes are susceptible, and, thus, $m^{S}_{ik}(0)=1$ and $m^I_{ik}(0)=m^{E}_{ik}(0)=m^{L}_{ik}(0)=0, \hspace{0.2 cm} \forall i,k.$

Each IoBT node of degree $k$ and type $i$ seeks to determine the probability $\alpha_{ik}(t)$ of accepting the received information that minimizes its cost. Let $x^{S}_{ik}(t)$, $x^{E}_{ik}(t)$, $x^{L}_{ik}(t)$, and $x^I_{ik}(t)$ be the probabilities that the node of degree $k$ and type $i$ is in the $S$, $E$, $L$, and $I$ states, respectively.  Based on the epidemic dynamics in (\ref{mI}), the state $\boldsymbol{x}_{ik}(t)=(x^{S}_{ik}(t),x^{E}_{ik}(t),x^L_{ik}(t), x^I_{ik}(t))$ of each node is governed by the following differential equations:
\vspace{-0.4 cm}

\small
\begin{eqnarray}
&&\hspace{-2cm}\frac{\partial x^{S}_{ik}(t)}{\partial t}=\hspace{-0.1 cm}-(1-\bar{\alpha}_{ik}(t)+\bar{\alpha}_{ik}(t) R_{ik}(\Theta(t))x^{S}_{ik}(t)+(1-\delta_{ik})\gamma^E_{ik}x^E_{ik}+(1-\delta_{ik})m^L_{ik}+\nu_{ik}x^I_{ik},\label{xs}\\
&&\hspace{-2 cm}\frac{\partial x^{E}_{ik}(t)}{\partial t}=-(1-\delta_{ik})x^{E}_{ik}+(1-\bar{\alpha}_{ik}(t))R_{ik}(\Theta(t))x^{S}_{ik}(t),\label{xE}\\
&&\hspace{-2 cm}\frac{\partial x^{L}_{ik}(t)}{\partial t}=-(1-\delta_{ik})x^{L}_{ik}+(1-\bar{\alpha}_{ik}(t))L_{ik}(\Theta(t))x^{S}_{ik}(t),\label{xL}\\
&&\hspace{-2 cm}\frac{\partial x^{I}_{ik}(t)}{\partial t}= \bar{\alpha}_{ik}(t) R_{ik}(\Theta(t))x^{S}_{ik}(t)+(1-\delta_{ik})\beta^E_{ik}x^E_{ik}-\nu_{ik}x^I_{ik}(t).\label{xI}
\end{eqnarray}

\normalsize
\vspace{-0.5 cm}
\subsection{Cost Functions}
The payoff of each IoBT node is expressed in terms of the \emph{quality of its information} and the cost of infection. The QoI is a metric that has been widely used to assess the information generated by sensors networks, in general, and military networks as discussed in particular \cite{qqq} and \cite{QoI2}. It is mainly a function of the precision of the sensing device, the integrity, and the age of information. Further, in \cite{QoC}, the authors consider the QoI of the information received by each node as an increasing function of the number of its neighbors. However, the integrity of the information is not considered. 
In contrast, in our IoBT problem, we define the QoI $Q_{ik}(t)$  to be a joint function of the degree of the node, the information integrity, and the age of information.
Consequently, the QoI of a given IoBT node of degree $k$ and type $i$ will be given by
\begin{equation}
Q_{ik}(t)=V_{ik}(t)-\kappa\delta_{ik},
\end{equation}
where $\delta_{ik}$ is the delay of processing the information when in the latent or exposed state, $\kappa$ is a normalization constant, $V_{ik}(t)$ is an increasing linear function of the number of transmitting noninfected links, a decreasing linear function in the number of infected links if the node accepts the information, and $V_{ik}(t)=0$ otherwise. Thus, the function $V_{ik}(t)$ captures the integrity of the information generated at each node by accounting for infected links. 
When the node is not infected, it is transmitting only when in the $\emph{S}$ state.  For a node with degree $k$ and type $i$, let $n_1$ and $n_2$ be the number of links pointing to a node in states $\emph{I}$ and $\emph{S}$, respectively.
The function $V_{ik}(t)$ also captures the integrity of the information generated by the node itself. 
Thus, this function $V_{ik}(t)$  is given by $V_{ik}(t)=n_2-n_1-y_{ik}(t)+(1-y_{ik}(t))=n_2-n_1-y_{ik}(t)+(1-y_{ik}(t))$, where $y_{ik}(t)$ indicates if the attacker successfully injects false information.
Next, we define $\eta(t)$ as the probability that a randomly chosen link is pointing to a susceptible node. $\eta(t)$ will then be given by:
\begin{equation}
\eta(t)=\frac{\sum_k kP(k)\sum_{i \in \mathcal{H}_k}p_k(i) m^{S}_{ik}(t)}{<k>}. \label{etta}
\end{equation}
 Thus, the number of links pointing to nodes in \emph{I}, $\emph{S}$, and either in $\emph{L}$ or $\emph{E}$ states follows a multinomial distribution with parameters $\Theta(t)$, $\eta(t)$, and $1-\Theta(t)-\eta(t)$. 

 Consequently, when the node is susceptible and accepts true information, the expected QoI  is $\bar{V}^T_{ik}(\eta(t))=k\eta(t)+1$. Otherwise, when a susceptible node accepts misinformation, the expected value of $V^M_{ik}(t)$ given that the node receives misinformation and $\boldsymbol{\gamma}(t)=(\Theta(t),\eta(t))$ will be

\vspace{-0.4 cm}

\begin{eqnarray}
&&\hspace{-0.9 cm}\bar{V}^{M}_{ik}(\boldsymbol{\gamma}(t))=\frac{F^M_{ik}(\boldsymbol{\gamma}(t))}{1-L_{ik}(\Theta(t))}, \label{qoIM}
\end{eqnarray}

where
\begin{eqnarray}
&&\hspace{-0.9 cm}\bar{F}^{M}_{ik}(\boldsymbol{\gamma}(t))=\hspace{-0.1 cm}\sum_{n_1=0}^k\sum_{n_2=0}^{k-n_1} \Big(\lambda_{ik} (n_2-n_1-1)+\sum_{n_1=0}^k\sum_{n_2=0}^{k-n_1}(1-\lambda_{ik})  (n_2-n_1)-\sum_{n_2=0}^{k}(1-\lambda_{ik})  (n_2)\Big)\nonumber\\
&&\hspace{1.3 cm}\times\frac{k!}{n_1!n_2!(k-n_1-n_2)!}\times\Theta(t)^{n_1}\eta(t)^{n_2}(1-\Theta(t)-\eta(t))^{k-n_1-n_2} \nonumber\\
&&\hspace{0.7 cm}=k\eta(t)-k\Theta(t)-\lambda_{ik}-(1-\lambda_{ik})k\eta(t).\label{qoIM}
\end{eqnarray}


\normalsize

Whenever a given IoBT node is in the susceptible state and suspects its information to be modified by the attacker, this node becomes exposed, and the QoI in this case will be $\bar{V}^I_{ik}(t)-\delta_{ik}(t)$  if it accepts the information with probability $\beta^E_{ik}(t)$, where $V_{ik}(t)$ is given by (\ref{qoIM}). Otherwise, if it does not accept the information, the QoI will be $0$.

When the node is in the susceptible state and doubts information which is true, the QoI is  $k\eta(t)+1-\delta_{ik}$ if it accepts the information with probability $\beta^L_{ik}$. Thus, the expected QoI given that the node is in the susceptible state with respect to $\alpha_{ik}(t)$ will be given by
\vspace{-0.5 cm}

\begin{eqnarray}
\mathbb{E}_{\alpha_{ik}(t)}[Q_{ik}(t)]&=&\alpha_{ik}(t)(L_{ik}(\Theta(t))(k\eta(t)+1)+\alpha_{ik}(t)(k\eta(t)-k\Theta(t)-\lambda_{ik}-(1-\lambda_{ik})k\eta(t))\nonumber\\
&&\hspace{-0.2 cm}+\sigma_{ik}(t)L_{ik}(\Theta(t))(\beta^L_{ik}(t)(k\eta(t)+1-\kappa \delta_{ik}))\nonumber\\
&&\hspace{-0.2 cm}+\sigma_{ik}(t)(\beta^E_{ik}(k\eta(t)-k\Theta(t)-\lambda_{ik}-(1-\lambda_{ik})k\eta(t)-\kappa \delta_{ik}(1-L_{ik}(\Theta(t))).
\end{eqnarray}
\normalsize

For a node of degree $k$ and type $i$, the cost $c_{ik}$ of infection depends on its importance and functionality in the IoBT. For example, the cost of infection of a fusion center will be higher than that of a cluster head, and the cost of infection of a drone is higher than that of a simple sensor. When the node is in the $\emph{S}$ state, the cost is expressed as the square of the difference between the expected QoI and a target value $Q_T$. Thus, the cost of a node of degree $k$ and type $i$ at time $t$ will be
\vspace{-0.3 cm}
\begin{eqnarray}
&&\hspace{-0.6 cm}v_{ik}(\boldsymbol{x}_{ik},\alpha_{ik}(t), \boldsymbol{\gamma}(t))=x^{S}_{ik}(t)(\mathbb{E}_{\alpha_{ik}(t)}[\bar{Q}_{ik}(\boldsymbol{\gamma}(t))]-Q_T)^2+x^I_{ik}(t)c_{ik}
\label{instcost}
\end{eqnarray}
\normalsize
 The objective of each IoBT node of degree $k$ and type $i$ is to minimize its cost over $[0,T]$, i.e., each IoBT node will seek to solve the following optimization problem:
\begin{eqnarray}
&&\hspace{-1 cm}\min_{\boldsymbol{\alpha}_{ik}(t)}\int_{t=0}^Tv_{ik}({\boldsymbol{x}_{ik},\alpha}_{ik}(t), \boldsymbol{\gamma}(t)) \hspace{0.1 cm} \text{s.t.} \hspace{0.1 cm} {\alpha}_{ik}(t) \in [0,1]^2,   \label{payoff}
\end{eqnarray}
\normalsize
subject to the state constraints in (\ref{xs})-(\ref{xI}) and $\textrm{with}\hspace{0.1 cm} v_{ik}(T)=0$. As shown in (\ref{xs})-(\ref{xI}) and (\ref{instcost}), the state evolution of each IoBT node as well its cost function depend on $\Theta(t)$ and $\eta(t)$ and therefore on the mean-field vector $\boldsymbol{m}_{ik}(t)=(m^l_{ik}(t))_{l \in \mathcal{S}}$ for all $(i,k)$. Further, the dynamics of the IoBT nodes and their cost functons depend on their degree and their type.  Thus, the problem is formulated using game theory \cite{gametheory,gametheory1}. In particular, we use  as a finite state mean-field game \cite{finitemean-field} with multiclass agents \cite{multiclass} as explained next.

\vspace{-0.1 cm}
\section{IoBT Mean-Field Game with Multiclass Agents}
\vspace{-0.1 cm}
\subsection{Game Formulation}
Our problem is formulated as a finite state, mean-field game \cite{finitemean-field} with multiclass agents \cite{multiclass} where the players are the IoBT nodes, and each IoBT node can be in a state belonging to the set $\mathcal{S}=\{S,E,L,I\}$. IoBT nodes having the same degree $k$ and type $i$ belong to class $(i,k)$. We denote by $\mathcal{C}$ the set of all classes.
The proportion $m^l_{ik}$ of IoBT nodes of class $(i,k)$ in each state $l$ evolves according to (\ref{ms})-(\ref{mI}).

We fix a reference player of class $(i,k)$. The state evolution of the reference player is given by  (\ref{xs})-(\ref{xI}). Thus, the state evolution of the reference player depends on $\Theta(t)$ and $\eta(t)$ and therefore on $\boldsymbol{m}_{ik}(t)$ for all $(i,k)$ as well as its control which is the probability $\alpha_{ik}(t)$.
Each IoBT node has full knowledge of the state distribution $\boldsymbol{m}_{ik}(t)$ of the remaining nodes for all classes in $\mathcal{C}$. The cost of each IoBT node of class $(i,k)$ is given by (\ref{instcost}). The objective of each IoBT node of class $(i,k)$ is to find the optimal probability $\alpha_{ik}(t)$ that minimizes its cost according to (\ref{payoff}).

 In order to find the minimum cost, the reference player solves a continuous-time finite-state Markov decision process with finite horizon \cite{finitemarkov} defined by the set of states $\mathcal{S}$. In our game, since each IoBT node chooses the probability $\alpha_{ik}(t)$  when in the \emph{S} state, then the action sets for each state will be given by: $\mathcal{A}_{S}=[0,1]$ and $\mathcal{A}_I=\mathcal{A}_E=\mathcal{A}_L=\phi$. The running costs for each state $l$ are given by
$v_{ik}(S, \boldsymbol{\gamma}(t), \alpha_{ik}(t))=(\mathbb{E}_{\alpha_{ik}(t)}[\bar{Q}_{ik}(\boldsymbol{\gamma}(t))]-Q_T)^2 $, $v_{ik}(E)=v_{ik}(L)=0$, and $v_{ik}(I)=c_{ik}$. Let $u^l_{ik}(t)=\int_{t}^Tv_{ik}(l(s),\boldsymbol{\gamma}(s),\alpha^{j}_{ik}(s))ds$ be the total cost starting from time $t$ when in state $l$ where $l(s)$ is the state at time $s$ and $\alpha^j_{ik}(s)$ is the action taken when in state $j$ at time $s$.  Then, for a given $\boldsymbol{\gamma}(t)$, the reference player uses the so-called Hamilton Jacobi (HJ) equations \cite{finitemean-field} to find the minimum cost. The HJ  equations are defined as
$-\frac{\partial u^l_{ik}}{\partial t}=h(\Delta_l \boldsymbol{u}_{ik}, \boldsymbol{\gamma}(t),l),$
for every state $l \in \mathcal{S}$ where $\boldsymbol{u}_{ik}=(u^l_{ik})_{l \in \mathcal{S}}$, $\Delta_l \boldsymbol{u}_{ik}=(u^j_{ik}-u^l_{ik})_{j \in \mathcal{S}}$, and
\vspace{-0.6 cm}

\small
\begin{eqnarray}
\hspace{-0.5 cm}h(\Delta_l \boldsymbol{u}_{ik}, \boldsymbol{\gamma}(t),l)&=&\min_{\alpha^l_{ik}(t)}v_{ik}(l,\boldsymbol{\gamma}(t),\alpha^l_{ik}(t))+\sum_{j \in \mathcal{S}} G^{ik}_{lj} (\alpha^l_{ik}(t),\Theta(t))(u^j_{ik}-u^l_{ik}), \label{legendre}
\end{eqnarray}
\normalsize
where $G^{ik}_{lj}$ is the transition rate from state $l$ to $j$. 
For our problem, the HJ  equations are specifically given by
\vspace{-1 cm}

\small
\begin{eqnarray}
&&\hspace{-2.9 cm}-\frac{\partial u^{S}_{ik}}{\partial t}=\min_{\alpha_{ik}(t)}v_{ik}(S,\boldsymbol{\gamma}(t),\alpha^{S_T}_{ik}(t))+(1-\alpha_{ik})R_{ik}(\Theta(t))(u^{E}_{ik}-u^{S}_{ik})\hspace{-2 cm}\nonumber\\
&&\hspace{-1.3 cm}+(1-\alpha_{ik})L_{ik}(\Theta(t))(u^{L}_{ik}-u^{S}_{ik})+\alpha_{ik}(t)R_{ik}(\Theta(t))(u^I_{ik}-u^{S_T}_{ik}),
\label{vs}\\
&&\hspace{-3 cm}-\frac{\partial u^{E}_{ik}}{\partial t}=(1-\delta_{ik})\beta^E_{ik}(u^I_{ik}-u^E_{ik})+(1-\delta_{ik})\gamma^E_{ik}(u^S_{ik}-u^E_{ik}),\\
&&\hspace{-3 cm}-\frac{\partial u^{L}_{ik}}{\partial t}=(1-\delta_{ik})(u^S_{ik}-u^L_{ik}),\\
&&\hspace{-3 cm}-\frac{\partial u^I_{ik}}{\partial t}=c_{ik}+\nu_{ik}(u^{S}_{ik}-u^I_{ik}),\label{vi}
\end{eqnarray}
\normalsize
with $u^{S}_{ik}(T)=u^{E}_{ik}(T)=u^{L}_{ik}(T)=u^I_{ik}(T)=0$. Thus, $\alpha_{ik}(t)$  that minimizes the Hamiltonian $h(\Delta_{S} \boldsymbol{u}_{ik}, \boldsymbol{\gamma}(t),S)$ is the optimal value. We denote by $\alpha_{ik}(\Delta_{S}\boldsymbol{u}_{ik}, \boldsymbol{\gamma}(t))$ the optimal value.  Due to the dependence of the optimal value on the mean-field dynamics through $\boldsymbol{\gamma}(t)$, the optimal value $\alpha_{ik}(\Delta_{S}\boldsymbol{u}_{ik},  \boldsymbol{\gamma}(t))$ is called the \emph{best response} with respect to $\boldsymbol{\gamma}(t)$. The following remark presents the best response $\alpha_{ik}(\Delta_{S}\boldsymbol{u}_{ik}, \boldsymbol{\gamma}(t))$  of a player of class $(i,k)$ for $\boldsymbol{\gamma}(t)$.

\begin{remark}
\emph{For a given $\boldsymbol{\gamma}(t)$, the best response $\alpha_{ik}(\Delta_{l}\boldsymbol{u}_{ik},\boldsymbol{\gamma}(t))$ of a player of class $(i,k)$ is }
\small
\begin{equation}
  \hspace{-0.1 cm}  \alpha_{ik}(\Delta_l\boldsymbol{u}_{ik},\boldsymbol{\gamma}(t))=
\begin{cases}
 0, &\hspace{-0.2 cm} \text{\emph{if} }  g_{ik}(\Delta_l\boldsymbol{u}_{ik},\boldsymbol{\gamma}(t)) < 0,\\
\smaller g_{ik}(\Delta_S\boldsymbol{u}_{ik}\normalsize,\boldsymbol{\gamma}(t)), & \hspace{-0.3 cm} \text{\emph{if} }  0<g_{ik}(\Delta_{l}\boldsymbol{u}_{ik},\boldsymbol{\gamma}(t)) < 1,\\
  1,              & \text{\emph{otherwise}},
\end{cases} \label{alphab}
\end{equation}
\normalsize
\emph{where}
\vspace{-0.2 cm}
\small
\begin{equation}
g_{ik}(\Delta_{l}\boldsymbol{u}_{ik},\boldsymbol{\gamma}(t))=\frac{R_{ik}(\Theta(t))(u^E_{ik}-u^S_{ik})+L_{ik}(\Theta(t))(u^L_{ik}-u^S_{ik})-R_{ik}(\Theta(t))(u^I_{ik}-u^S_{ik})+2A_1(Q_T-A_2)}{2A^2_1},\nonumber\\
\end{equation}
\begin{eqnarray}
&&\hspace{-0.8 cm}A1=(L_{ik}(\Theta(t))(k\eta(t)+1)+(k\eta(t)-k\Theta(t)-\lambda_{ik}-(1-\lambda_{ik})k\eta(t)))+L_{ik}(\Theta(t))(\beta^L_{ik}(t)(k\eta(t)+1-\kappa\delta_{ik}))\nonumber\\
&&-(\beta^E_{ik}(k\eta(t)-k\Theta(t)-\lambda_{ik}-(1-\lambda_{ik})k\eta(t)-\kappa\delta_{ik}(1-L_{ik}(\Theta(t))), \nonumber
\end{eqnarray}
\begin{eqnarray}
&& \hspace{-0.9 cm}A_2=+L_{ik}(\Theta(t))(\beta^L_{ik}(t)(k\eta(t)+1-\kappa\delta_{ik}))+(1-L_{ik}(\Theta(t)))(\beta^E_{ik}(k\eta(t)-k\Theta(t)-\lambda_{ik}-(1-\lambda_{ik})k\eta(t)-\kappa \delta_{ik})).\nonumber
\end{eqnarray}
\end{remark}
\vspace{-0.3 cm}
The result directly follows by equating the partial derivative of the right-hand side of (\ref{vs})  with respect to $\alpha_{ik}(t)$. Given the best response in (\ref{alphab}), next, we characterize the MFE for our IoBT game .
\vspace{-0.5 cm}
\subsection{Mean-Field Equilibrium}
The MFE occurs when the best response $\alpha_{ik}(\Delta_S\boldsymbol{u}_{ik},\boldsymbol{\gamma}(t))$ of a player belonging to class $(i,k)$ is the same as the strategy $\bar{\alpha}_{ik}$ of the population belonging to class $(i,k)$, for all $(i,k)$. Thus, the MFE will be the solution of the Hamilton-Jacobi equations in (\ref{vs})-(\ref{vi}), and the Kolomogrov equations in (\ref{ms})-(\ref{mI}) with $\bar{\alpha}_{ik}(t)=\alpha_{ik}(\Delta_S\boldsymbol{u}_{ik},\boldsymbol{\gamma}(t))$, $u^{S}_{ik}(T)=u^{E}_{ik}(T)=u^{L}_{ik}(T)=u^I_{ik}(T)=0$ and $m^{S}_{ik}(0)=1$, $m^{L}_{ik}(0)=m^{E}_{ik}(0)=m^I_{ik}(0)=0 \hspace{0.2 cm} \forall i,k.$ For our IoBT mean-field game, the MFE exists. The proof follows from \cite[Proposition 4]{finitemean-field}. Even though existence follows from \cite{finitemean-field}, it is difficult to characterize analytically the MFE for  our IoBT game. Further, our game does not satisfy the standard conditions for uniqueness for mean-field games (see \cite{finitemean-field} and \cite{uniqueness}). In particular, the Hamiltonian $h(\Delta_S \boldsymbol{u}_{ik},\boldsymbol{\gamma}(t),l)$  is not strongly convex in $\Delta_s \boldsymbol{u}_{ik}$.
Thus, due to the aformentioned reasons, it is difficult to analyze analytically the uniqueness of our game.

In (\ref{ms})-(\ref{mI}), the mean-field equations are subject to initial conditions. Thus, in order to find the MFE, we propose an algorithm based on the forward backward sweep method \cite{FBSM}, which has been widely used to solve optimal control problems with initial conditions. The details of the proposed forward backward sweep as tailored to the IoBT are given by Algorithm \ref{FBSMalg}.


\begin{algorithm}[t]
\smaller
 \textbf{Input:} $\epsilon$, $T$, $\nu_{ik}$, $\lambda_{ik}$ for all $(i,k)$ \\
 \textbf{Output:} The equilibrium acceptance probabilities  $\boldsymbol{\alpha^{*}_{ik}}=(\alpha^{*}_{ik}(t))_{t \in [0,T]}$ for all $(i,k)$\\
Initialize: for all $t \in [0,T]$, $\alpha^{*}_{ik}(t)=\alpha_{ik,0},$ $iter=0$ \\
\Repeat{ $||\boldsymbol{\alpha^{*}_{ik}}-\boldsymbol{\alpha^{*}_{ik,\textrm{old}}}|| \leq \epsilon$ or $iter >I_{\max}$ }{
$\boldsymbol{\alpha^{*}_{ik,\textrm{old}}} \leftarrow \boldsymbol{\alpha^{*}_{ik}}$ \\
Compute $m^*_{ik}(t)$ using the mean-field equations (\ref{ms})-(\ref{mI}) with $\bar{\alpha}_{ik}(t)=\alpha_{ik,\textrm{old}}(t)$ $\forall (i,k)$, $t\in[0,t]$ (forward sweep)\\
Using  $m^*_{ik}(t)$, (\ref{theta}), and (\ref{etta}), compute  $\boldsymbol{\gamma}^*(t)$ $\forall t \in [0,T]$\\
Using $\boldsymbol{\gamma}^*(t)_{t \in [0,T]}$ , compute $\boldsymbol{\alpha^{*}_{ik}}$  using the Hamilton Jacobi equations (\ref{vs})-(\ref{vi})
(backward sweep)
$iter=iter+1$

}
\caption{Forward Backward Sweep Algorithm for the IoBT mean-field Game}
\label{FBSMalg}
\end{algorithm}
Algorithm \ref{FBSMalg} first solves the mean-field equations using a finite difference method and using the initial guess of the optimal acceptance probability $\boldsymbol{\alpha_{ik}}$. Next, using the mean-field solution $(m^*_{ik}(t))_{t \in [0,T]}$, the probabilities $\Theta^*(t)$ and $\eta^*(t)$  are computed according to (\ref{theta}) and (\ref{etta}), respectively. Algorithm \ref{FBSMalg} then computes the optimal acceptance probability $\boldsymbol{\alpha_{ik}}$  based on the HJ equations and on the the computed $(\Theta^*(t))_{t \in [0,T]}$ and  $(\eta^*(t))_{t \in [0,T]}$, using a finite difference method. The newly computed $\boldsymbol{\alpha_{ik}}$  are consequently used to recompute the mean-field solution. The process is repeated until convergence, or if the maximum number of iterations $I_{\max}$ is reached, since the forward backward sweep algorithm is not guaranteed to converge, in general. 

\vspace{-0.1 cm}
In practice, each IoBT node will run Algorithm 1 in order to determine its optimal probability of accepting information $\alpha^*_{ik}(t)$ over the considered interval $[0,T]$. It is assumed that the IoBT nodes acquire information about the characteristics (such as the processing delay $\delta_{ik}$ and the probabilities $\beta^E_{ik}$ and $\beta^L_{ik}$) of IoBT nodes belonging to other classes through an initial phase in which the fusion center/ base station broadcasts the information. For each subsequent time duration, each IoBT node will run Algorithm 1 and use, as the initial mean-field values, the mean-field values $m^{l*}_{ik}(T)$ at time $T$ from the previous run.
The process will be repeated until the end of the military operation.

\normalsize

While the mean-field formulation provides a tractable approach to analyze a massive IoBT system, it assumes that the number of IoBT devices to be infinite. In practice, the IoBT will have a large, but finite number of devices. As such, in the next section, we analyze an IoBT having a large, but finite number of nodes. The finite IoBT case is a better fit to a real-world IoBT and can account for all potential network sizes. However, computing the equilibria for the finite case is computationaly expensive as discussed next, and therefore is not suitable to be used by the IoBT nodes to determine the optimal strategies in real-time.  Thus, the complexity of determining the equilibria of the finite IoBT makes the proposed mean-field approach in Section II more suitable to determine the optimal stategies of the IoBT nodes.
\vspace{-0.3 cm}
\section{On the Tractability of the  Finite IoBT Case}
\vspace{-0.2 cm}
\subsection{Game Formulation}
We now consider the game when the IoBT network is composed of a finite number of nodes $N+1$, each of which can be in any one of the states in $\mathcal{S}$. The number of \emph{background nodes}, i.e. any node other than the reference player, belonging to class $(i,k)$ is $N_{ik}$ with the assumption that $\lim_{N\to\infty}\frac{N_{ik}}{N}=\pi_{ik}$ where $\pi_{ik}=P(k)p_k(i)$ is the distribution of the players in the previously studied infinite IoBT case. 

Thus, at any time $t$, the number of background nodes $\boldsymbol{n}_{ik}(t)=(n^j_{ik}(t))_{j \in \mathcal{S}}$ of class $(i,k)$ in each state  evolves according to a Markov chain.  
Hence, the probability that  a link is infected in the finite IoBT is
$\Theta_N(t)=\sum_k k \sum_{i \in \mathcal{H}_k} \frac{n^I_{ik}(t)}{N}$
Similarly, the probability of a randomly chosen link pointing to a susceptible node is
$\eta_N(t)=\sum_k k \sum_{i \in \mathcal{H}_k} \frac{n^{S}_{ik}(t)}{N}$.
We define the vector $\boldsymbol{n}(t)=(\boldsymbol{n}_{ik}(t))_{(i,k) \in \mathcal{C}}$ .
The state evolution of a background node belonging to class $(i,k)$ is 
\vspace{-0.4 cm}

\begin{equation}
   \mathbb{P}(s_{ik}(t+h)=j|s_{ik}(t)=l)= G^{lj}_{ik}(\Theta_N(t),\bar{\alpha}^l_{ik}(t))h+o(h), 
\label{finiteevol}
\end{equation}
\normalsize
where $o(h) \rightarrow 0$ as $h \rightarrow \infty$, $G^{lj}_{ik}$ is the transition rate from state $j$ to state $l$ and has the same expression as the infinite case with the variable $\Theta(t)$ replaced by $\Theta_N(t)$, and $\bar{\alpha}^{N}_{ik}(t)$ is the acceptance probability of the background nodes of class $(i,k)$. Thus, in our problem, $G^{lj}_{ik}$ depends only on $\Theta_N(t)$ and $\bar{\alpha}^{N}_{ik}(t)$ when it is in the $S$ state. Subsequently, $G^{lj}_{ik}(\Theta_N(t), \bar{\alpha}_{ik}(t))$ is replaced by $G^{lj}_{ik}(\boldsymbol{n}(t), \bar{\alpha}^N_{ik}(t))$   as $\Theta_N(t)$ is a  function of $n^I_{ik}(t)$ $\forall (i,k) \in \mathcal{C}$. Thus, the evolution of the number of nodes $\boldsymbol{n}_{ik}(t)$ will affect the transition rate $G^{lj}_{ik}$. The state evolution of the reference node is also given by (\ref{finiteevol}) but with $\bar{\alpha}^{N}_{ik}(t)$ replaced by $\alpha^{N}_{ik}(t)$.
Further, the state transitional rates of the different nodes are independent conditioned on $\boldsymbol{n}(t)$ and $l(t)$, where $l(t)$ is the state of the reference node. Thus, the evolution of $\boldsymbol{n}_{ik}(t)$ is given by 
$\mathbb{P}(\boldsymbol{n}_{ik}(t+h)=\boldsymbol{n}_{ik}+\boldsymbol{e}_{jl}|\boldsymbol{n}_{ik}(t)=\boldsymbol{n_{ik}}, \boldsymbol{n}(t)=\boldsymbol{n}, \boldsymbol{s}(t)=s)=\eta^s_{ik}(j,l,\boldsymbol{n})h+o(h)$ 
where  $\boldsymbol{e}_{jl}=\boldsymbol{e}_{j}-\boldsymbol{e}_{l}$, $\boldsymbol{e}_{j}$ is the $j^{th}$ vector of the canonical basis of $\mathbb{R}^{|\mathcal{S}|}$, 
and
\vspace{-0.1 cm}
\begin{equation}
   \rho^s_{ik}(l,j,\boldsymbol{n})= 
\begin{cases}
  n^l_{ik} G^{lj}_{ik}(\boldsymbol{n}',\bar{\alpha}^{N}_{ik}(t)), & \text{if } l=S,\\
   n^l_{ik} G^{lj}_{ik},              & \text{otherwise},
\end{cases} \label{rho}
\end{equation}
where $\boldsymbol{n}'=\boldsymbol{n}+\boldsymbol{e}^{ik}_{sl}$
if $(i,k)=(i',k')$ where $(i',k')$ is the class of the reference player, $\boldsymbol{n}+\boldsymbol{e}^{ik}_{sl}$ is the same as $\boldsymbol{n}$ but with $\boldsymbol{n}_{ik}$ replaced by $\boldsymbol{n}_{ik}+\boldsymbol{e}_{sl}$. $\boldsymbol{n}'=\boldsymbol{n}-\boldsymbol{e}^{ik}_{l}$ if $(i,k) \neq (i',k')$ where $\boldsymbol{n}-\boldsymbol{e}^{ik}_{l}$ is the same as $\boldsymbol{n}$ but with $\boldsymbol{n}_{ik}$ replaced by $\boldsymbol{n}_{ik}-\boldsymbol{e}_{l}$.



The reference node has full knowledge of the evolution of $\boldsymbol{n}_{ik}(t)=(n^j_{ik}(t))_{j \in \mathcal{S}}$ of the background nodes for all $(i,k) \in \mathcal{C}$. Thus, the reference IoBT node of class $(i,k)$ starting from state $l$ seeks to find the acceptance probabilities $\alpha^{N,l}_{ik}(t)$ ($l \in \{S_N,S_T\}$) that minimizes its total expected cost 
$u^{N,\boldsymbol{n},l}_{ik}=\min_{\alpha^N_{ik}(t)}\mathbb{E}\int_{0}^T v_{ik}(j(s),\boldsymbol{\gamma}^N(s),\alpha^j_{ik}(s))ds $
where $\boldsymbol{\gamma}^N(s)=(\Theta^N(s),\eta^N(s))$ and $v_{ik}(j(s),\boldsymbol{\gamma}^N(s),\alpha^j_{ik}(s))$ has the same expression as the infinite  case .
We denote by $u^{N,\boldsymbol{n},l}_{ik}(t)$ the total expected cost starting from time $t$ and when in state $l$ conditioned on $\boldsymbol{n}(t)=\boldsymbol{n}$. The HJ equation for this case will be given by  \cite{finitemean-field}
\vspace{-0.4 cm}

\small
\begin{eqnarray}
&&\hspace{-2 cm}-\frac{du^{N,\boldsymbol{n},l}_{ik}}{dt}=\sum_{r,v}\rho^l_{ik}(v,r,\boldsymbol{n})(u^{N,\boldsymbol{n}+e^{ik}_{vr},l}_{ik} - u^{N,\boldsymbol{n},l}_{ik}) + h(\Delta_l \boldsymbol{u}^{N,\boldsymbol{n}}_{ik} ,\boldsymbol{\gamma}_N(t), l),\label{HBf}
\end{eqnarray}
\normalsize
\vspace{-0.4 cm}
\\
with $u^{N,\boldsymbol{n},l}_{ik}(T)=0$ for all $l \in \mathcal{S}$ and  $(i,k) \in \mathcal{C}$.
where $\Delta_l \boldsymbol{u}^{N,\boldsymbol{n}}_{ik}=(u^{N,\boldsymbol{n},j}_{ik}-u^{N,\boldsymbol{n},l}_{ik})_{j \in \mathcal{S}}$ and $h(\Delta_l u^{N,\boldsymbol{n}}_{ik} ,\boldsymbol{\gamma}_N(t), l)$ has the same expression as in  (\ref{legendre}).
%

In the finite IoBT game, the equilibrium also occurs when $\alpha^{N}_{ik}(t)=\bar{\alpha}^{N}_{ik}(t)$ $\forall$ $(i,k)$. It can be easily shown that the equilibrium exists using a similar argument as in \cite{finitemean-field}.
However, due to the dependence of $u^{N,\boldsymbol{n},l}_{ik}$ on $\boldsymbol{n}$ in (\ref{HBf}), the number of possible evaluations of $u^{N,\boldsymbol{n},l}_{ik}$  grows exponentially with $T$ according to (\ref{HBf}). Hence, computing $u^{N,\boldsymbol{n},l}_{ik}$ for the finite IoBT is computationally expensive, unlike in the proposed mean-field approach where the computational complexity of computing $u^{l}_{ik}$ for all $l \in \mathcal{S}$ is only linear in $T$ according to (\ref{vs})-(\ref{alphab}). Thus, the mean-field approach is computationally more favourable to be used in terms of to find the optimal strategies of the IoBT nodes.
However, in order to ensure that the mean-field game yields a performance comparable to the finite IoBT game for large number of nodes, we discuss the convergence of the cost and distribution functions of the finite IoBT case to the mean-field case, as $N$ goes to infinity. 
\vspace{-0.4 cm}
\subsection{Convergence Conditions of the Finite IoBT Game}
\vspace{-0.2 cm}
We now extend the convergence results of the finite state mean-field games in \cite{finitemean-field} to the case of multiclass agents and when the transitional probability is a function of the control as well as the mean-field. We show the conditions under which the cost and distribution functions of the $N+1$ player game converges uniformly to corresponding functions in the mean-field game, in order to ensure that the mean-field IoBT game yields a performance comparable to the finite IoBT game.  Our proof relies on the following useful property from \cite[Proposition 7]{finitemean-field} which holds for the solution $u^{N,\boldsymbol{n},l}_{ik}$ to our HJ equations in (\ref{HBf}). 
\begin{remark}
\emph{Let $u^{N,\boldsymbol{n},l}_{ik}(t)$ be the solution of (\ref{HBf}). Then, there exists $C>0$ and $T^*>0$ such that for $0<T<T^*$, }
\vspace{-0.7 cm}

\small
\begin{equation}
\max_{rv}||u^{N,\boldsymbol{n}+\boldsymbol{e}^{rv}_{ik},l}_{ik}(t)-u^{N,\boldsymbol{n},l}_{ik}(t)|| \leq \frac{2C}{N},
\end{equation}
\label{grad}
\end{remark}
\vspace{-0.7 cm}
where the norm ||.|| used is the $\infty$ norm.

Also, we rely on the following properties of our game.

\begin{prop}
The studied IoBT game exhibits the following properties:
\begin{enumerate}
\item The transitional rate $G^{jl}_{ik}(\alpha_{ik}(t), \Theta(t))$ is a Lipchitz function of  $\alpha_{ik}(t)$ for all $i,k$.
\item The best response $\alpha^*_{ik}(\Delta_S \boldsymbol{u}_{ik}, \boldsymbol{\gamma}(t))$ is Lipschitz in $\Delta_l \boldsymbol{u}_{ik}$, $\Theta(t)$, $\eta(t)$, and $m_{ik}(t)$ $\forall$ $(i,k) \in \mathcal{C}$ provided that the immediate cost $v_{ik}$ is strongly convex in $\alpha_{ik}$.
\item   The transitional rate $G^{jl}_{ik}(\alpha^*_{ik},\Theta(t))$ is Lipchitz in  $\Delta_j u$ and $\Theta(t)$.
\item The immediate cost $v_{ik}$  and its derivative $\nabla_{\alpha_{ik}} v_{ik}$ is Lipchitz in $\Theta(t)$ and $\eta(t)$.
\item  The function $h(\Delta_S \boldsymbol{u}_{ik},\boldsymbol{\gamma}(t),l)$ is Lipchitz in $\Delta_l u$, $\Theta(t)$, and $\eta(t)$.
\end{enumerate}
\label{prop1}
\end{prop}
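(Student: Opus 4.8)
The plan is to derive all five items from a single observation: every quantity named in the statement is a function on a \emph{compact} domain, so it suffices to exhibit it as (piecewise) $C^1$ — indeed, mostly polynomial — and the operations that appear, namely clipping to $[0,1]$, taking pointwise minima over a compact action set, and composing Lipschitz maps, all preserve the Lipschitz property. The compactness facts I would record first are: $\Theta(t),\eta(t),\alpha_{ik}(t)$ and every $m^l_{ik}(t)$ lie in $[0,1]$; $R_{ik}(\Theta)=\lambda_{ik}+k\Theta$ and $L_{ik}(\Theta)=(1-\lambda_{ik})(1-\Theta)^k$ are polynomials in $\Theta$ bounded by $1+K_{\max}$ and $1$ on $[0,1]$; and each value function $u^l_{ik}(t)=\int_t^T v_{ik}(\cdot)\,ds$ is the integral of a running cost that is uniformly bounded on $[0,T]$, so $\boldsymbol u_{ik}(t)$, hence $\Delta_l\boldsymbol u_{ik}(t)$, lies in a fixed bounded set. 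All estimates below are therefore over compact sets, with Lipschitz constants depending only on $K_{\max}$, $T$, $Q_T$, $\kappa$ and the fixed node parameters.

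Items 1 and 4 are then immediate. The only transition rates out of $S$ are $\alpha_{ik}R_{ik}(\Theta)$, $(1-\alpha_{ik})R_{ik}(\Theta)$ and $(1-\alpha_{ik})L_{ik}(\Theta)$, which are affine in $\alpha_{ik}$ with slope at most $1+K_{\max}$, while all remaining entries of $G^{jl}_{ik}$ are constants; so $G^{jl}_{ik}$ is Lipschitz in $\alpha_{ik}$, and, since $R_{ik},L_{ik}$ are polynomials in $\Theta$, also Lipschitz in $\Theta$ (item 1, plus the $\Theta$-estimate used later). For item 4, the expression for $\mathbb{E}_{\alpha_{ik}(t)}[Q_{ik}(t)]$ derived above shows the expected QoI is a polynomial in $(\alpha_{ik},\Theta,\eta)$ with coefficients assembled from fixed constants; hence $v_{ik}(S,\boldsymbol\gamma,\alpha_{ik})=(\mathbb{E}_{\alpha_{ik}}[\bar Q_{ik}]-Q_T)^2$ and $\nabla_{\alpha_{ik}}v_{ik}$ are polynomials in $(\alpha_{ik},\Theta,\eta)$, hence $C^\infty$, hence Lipschitz in $(\Theta,\eta)$ uniformly for $\alpha_{ik}\in[0,1]$; for $l\in\{E,L,I\}$ the running cost is constant and the claim is trivial.

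Item 2 is the crux. I would use the explicit best response (\ref{alphab}): $\alpha^*_{ik}(\Delta_S\boldsymbol u_{ik},\boldsymbol\gamma(t))$ is the projection onto $[0,1]$ of $g_{ik}(\Delta_S\boldsymbol u_{ik},\boldsymbol\gamma(t))$, and that projection is $1$-Lipschitz, so it is enough to bound $g_{ik}$. Now $g_{ik}$ has numerator a polynomial in $(\Delta_S\boldsymbol u_{ik},\Theta,\eta)$, bounded on the compact domain, and denominator $2A_1^2$ with $A_1$ a polynomial in $(\Theta,\eta)$; the one thing that can fail is $A_1=0$. This is exactly where strong convexity enters: since $\mathbb{E}_{\alpha_{ik}}[Q_{ik}]$ is affine in $\alpha_{ik}$ with slope $A_1$, the square $v_{ik}(S,\boldsymbol\gamma,\alpha_{ik})$ is quadratic in $\alpha_{ik}$ with leading coefficient $A_1^2$, i.e. $2A_1^2=\partial^2 v_{ik}(S,\boldsymbol\gamma,\alpha_{ik})/\partial\alpha_{ik}^2$; if $v_{ik}$ is $\mu$-strongly convex in $\alpha_{ik}$ uniformly in $(\Theta,\eta)$, then $2A_1^2\ge\mu>0$ on all of $[0,1]^2$. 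Thus $g_{ik}$ is a ratio of a $C^1$ function and a $C^1$ function bounded below by a positive constant, hence $C^1$ on the compact domain, hence Lipschitz in $(\Delta_l\boldsymbol u_{ik},\Theta,\eta)$; composing with the projection preserves this. Finally, by (\ref{theta})--(\ref{etta}) the map $(m^l_{ik})_{l}\mapsto(\Theta,\eta)$ is linear, hence Lipschitz, so $\alpha^*_{ik}$ is Lipschitz in $m_{ik}$ as well.

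Items 3 and 5 follow by composition and by the fact that minimization preserves Lipschitz continuity. For item 3, $G^{jl}_{ik}(\alpha^*_{ik}(\Delta_S\boldsymbol u_{ik},\boldsymbol\gamma),\Theta)$ is the composition of $(\alpha,\Theta)\mapsto G^{jl}_{ik}(\alpha,\Theta)$, Lipschitz by item 1 and the $\Theta$-estimate, with $(\Delta_S\boldsymbol u_{ik},\Theta,\eta)\mapsto(\alpha^*_{ik},\Theta)$, Lipschitz by item 2, hence Lipschitz in $(\Delta_S\boldsymbol u_{ik},\Theta)$ on the compact domain. For item 5, write $h(\Delta_l\boldsymbol u_{ik},\boldsymbol\gamma,l)=\min_{\alpha\in\mathcal A_l}\big(v_{ik}(l,\boldsymbol\gamma,\alpha)+\sum_j G^{ik}_{lj}(\alpha,\Theta)(u^j_{ik}-u^l_{ik})\big)$: the bracketed expression is, by items 1 and 4 together with the boundedness of $\Delta_l\boldsymbol u_{ik}$, Lipschitz in $(\Delta_l\boldsymbol u_{ik},\Theta,\eta)$ with a constant uniform in $\alpha\in\mathcal A_l\subseteq[0,1]$, and the pointwise minimum over the compact set $\mathcal A_l$ of a uniformly-Lipschitz family is Lipschitz with the same constant (for $l\ne S$ the minimization is trivial and the claim reduces to items 1 and 4). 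The one genuine obstacle is the denominator bound in item 2: one must verify that $2A_1^2$ really is the second $\alpha_{ik}$-derivative of the running cost, so that the hypothesized uniform strong-convexity modulus $\mu>0$ yields $A_1^2\ge\mu/2$ for \emph{all} admissible $(\Theta,\eta)$ and keeps the best-response formula away from its singularity; once that is in hand, every remaining step is the routine fact that polynomials on compact sets, the projection onto $[0,1]$, pointwise minima over compact action sets, and compositions of Lipschitz maps are all Lipschitz.
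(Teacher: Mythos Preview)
Your proposal is correct and follows the same overall logic as the paper's proof: polynomial/affine structure of the transition rates and running cost on compact domains, with strong convexity of $v_{ik}$ in $\alpha_{ik}$ supplying the uniform lower bound needed for item~2. The one noteworthy difference is in item~2 itself. The paper defers to \cite[Proposition~1]{finitemean-field} for the Lipschitz property of the best response and then spends most of its effort computing $\partial^2 v_{ik}/\partial\alpha_{ik}^2$ explicitly, observing that it need not be positive for the raw QoI, and proposing a shift $Q'_{ik}=Q_{ik}+S_k$ with $S_k=k+2$ so as to manufacture a uniform lower bound. You instead work directly with the explicit clipped formula~(\ref{alphab}): since the projection onto $[0,1]$ is $1$-Lipschitz, it suffices to control $g_{ik}$, whose numerator is affine in $\Delta_S\boldsymbol u_{ik}$ and polynomial in $(\Theta,\eta)$, and whose denominator $2A_1^2$ is exactly $\partial^2 v_{ik}/\partial\alpha_{ik}^2$; the strong-convexity hypothesis then bounds it below by a fixed $\mu>0$, keeping $g_{ik}$ uniformly $C^1$. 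This is a cleaner, self-contained route that takes the strong-convexity premise at face value (as the proposition states it) rather than engineering it via scaling. For items~3 and~5 you use composition and the standard ``pointwise minimum of a uniformly Lipschitz family over a compact action set is Lipschitz'' fact, whereas the paper verifies item~3 by computing $\partial G^{SI}_{ik}/\partial\Theta$ directly and simply asserts item~5 follows from the preceding items; both arguments are valid and equivalent in strength.
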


\begin{proof}
See Appendix A.
\end{proof}

Using the results of Proposition \ref{prop1}, we can now present the convergence results in the following theorem.

\begin{thm}
Let $T^*$ be defined as in Remark \ref{grad}. There exists a constant $C$ independent of $N$, for which, if $T<T^*$, satisfies $\mu=TC<1$ then
\vspace{-0.4 cm}

\small
\begin{equation}
\sum_{i,k}V^N_{ik}(t)+W^N_{ik}(t) \leq \frac{C}{1-\mu}\frac{1}{N_{\max}},
\end{equation}\normalsize
\vspace{-0.4 cm}
\\
for all $t \in [0,T]$, where $N_{\max}=\max_{(r,v)\in\mathcal{C}}N_{rv}$, $W^N_{ik}(t)=\mathbb{E}\Big[||\boldsymbol{u}_{ik}(t)-\boldsymbol{u}^{N,\boldsymbol{n}}_{ik}(t)||^2\Big]$, $V^N_{ik}(t)=\mathbb{E}(||\frac{\boldsymbol{n}_{ik}(t)}{N_{ik}}-\boldsymbol{m}_{ik}(t)||)^2$, $\boldsymbol{m}_{ik}(t)$ and $\boldsymbol{u}_{ik}(t)$ are the mean-field and cost functions at the MFE, $\boldsymbol{n}_{ik}(t)$ and $\boldsymbol{u}^{N,\boldsymbol{n}}_{ik}(t)$ are the equilibrium distribution and cost value of $N+1$ player game .
\end{thm}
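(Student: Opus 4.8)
The plan is to adapt the coupling argument of \cite[Theorem 3]{finitemean-field} to the multiclass setting, carrying along the extra dependence of the transition rates on the mean field. First I would write both pairs of quantities --- $(\boldsymbol{m}_{ik},\boldsymbol{u}_{ik})$ for the MFE and $(\boldsymbol{n}_{ik}/N_{ik},\boldsymbol{u}^{N,\boldsymbol{n}}_{ik})$ for the finite game --- in integral form. For the distribution part, Dynkin's formula applied to the Markov chain $\boldsymbol{n}_{ik}(t)$ gives $\tfrac{\boldsymbol{n}_{ik}(t)}{N_{ik}}=\boldsymbol{m}_{ik}(0)+\int_0^t b_{ik}\big(\tfrac{\boldsymbol{n}(s)}{N},\bar\alpha^N(s)\big)\,ds+M^N_{ik}(t)$, where $b_{ik}$ is the drift built from the rates $G^{lj}_{ik}$ and $M^N_{ik}$ is a martingale whose predictable quadratic variation is $O(1/N_{ik})$, since each of the $N_{ik}$ background nodes jumps independently with bounded rate. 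Subtracting the Kolmogorov ODE (\ref{ms})--(\ref{mI}), using that $b_{ik}$ is Lipschitz in its arguments (Proposition \ref{prop1}, items 1 and 3, together with the Lipschitz best response of item 2), squaring, and applying Doob's inequality to $M^N_{ik}$ yields
\[
V^N_{ik}(t)\leq C\int_0^t\sum_{(j,l)\in\mathcal{C}}\Big(V^N_{jl}(s)+W^N_{jl}(s)\Big)\,ds+\frac{C}{N_{ik}}.
\]

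For the value-function part I would subtract the mean-field HJ system (\ref{vs})--(\ref{vi}) from the finite HJ equation (\ref{HBf}). The term $\sum_{r,v}\rho^l_{ik}(v,r,\boldsymbol{n})\big(u^{N,\boldsymbol{n}+e^{ik}_{vr},l}_{ik}-u^{N,\boldsymbol{n},l}_{ik}\big)$ is bounded by $\tfrac{2C}{N}$ times the (bounded) total rate using Remark \ref{grad} --- this is exactly where the short-horizon restriction $T<T^*$ enters. The remaining discrepancy is $h(\Delta_l\boldsymbol{u}^{N,\boldsymbol{n}}_{ik},\boldsymbol{\gamma}_N(t),l)-h(\Delta_l\boldsymbol{u}_{ik},\boldsymbol{\gamma}(t),l)$, which by the Lipschitz property of $h$ (Proposition \ref{prop1}, item 5) is controlled by $\|\Delta_l\boldsymbol{u}^{N,\boldsymbol{n}}_{ik}-\Delta_l\boldsymbol{u}_{ik}\|$ and $\|\boldsymbol{\gamma}_N(t)-\boldsymbol{\gamma}(t)\|$; the first couples back to $W^N$, and since $\Theta_N,\eta_N$ are the same weighted averages of $\tfrac{\boldsymbol{n}_{jl}}{N}$ that $\Theta,\eta$ are of $\boldsymbol{m}_{jl}$, the second is bounded by $\sum_{(j,l)}\|\tfrac{\boldsymbol{n}_{jl}}{N_{jl}}-\boldsymbol{m}_{jl}\|$, i.e. by a multiple of $\sum_{(j,l)}V^N_{jl}$. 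Integrating the backward ODE from $t$ to $T$ with terminal condition zero and squaring gives
\[
W^N_{ik}(t)\leq C\int_t^T\sum_{(j,l)\in\mathcal{C}}\Big(V^N_{jl}(s)+W^N_{jl}(s)\Big)\,ds+\frac{C}{N}.
\]

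Finally I would sum the two displays over all classes $(i,k)\in\mathcal{C}$, set $\Phi(t)=\sum_{(i,k)}\big(V^N_{ik}(t)+W^N_{ik}(t)\big)$ and $\bar\Phi=\sup_{t\in[0,T]}\Phi(t)$, bound every time integral by $T\bar\Phi$, and bound each $\tfrac{1}{N_{ik}}$ and $\tfrac{1}{N}$ by $\tfrac{1}{N_{\max}}$ (absorbing the finite number of classes into the constant $C$). This produces $\bar\Phi\leq TC\,\bar\Phi+\tfrac{C}{N_{\max}}$; since the hypothesis $\mu=TC<1$ makes $1-\mu>0$, rearranging gives $\bar\Phi\leq\tfrac{C}{1-\mu}\tfrac{1}{N_{\max}}$, which implies the claimed bound for every $t\in[0,T]$. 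The main obstacle is establishing those two integral inequalities cleanly: it requires the best response to be genuinely Lipschitz, which rests on strong convexity of $v_{ik}$ in $\alpha_{ik}$ (the assumption flagged in Proposition \ref{prop1}), and the $O(1/N)$ gradient estimate of Remark \ref{grad}, valid only on $[0,T^*)$. Care is also needed because, unlike the single-population model of \cite{finitemean-field}, the rates here depend on the mean field through $\Theta_N$, so the $V^N$ and $W^N$ estimates are genuinely coupled in both directions and the Gronwall-type closing argument must be carried out on the sum $\Phi$ simultaneously rather than on each piece in turn.
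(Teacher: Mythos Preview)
Your overall architecture matches the paper's: two integral inequalities (one forward for $V^N_{ik}$, one backward for $W^N_{ik}$), obtained from Dynkin's formula together with the Lipschitz properties of Proposition \ref{prop1} and the gradient estimate of Remark \ref{grad}, then summed over classes and closed by taking a supremum in $t$. Your treatment of $V^N_{ik}$ (martingale decomposition of $\boldsymbol{n}_{ik}/N_{ik}$, subtract the Kolmogorov ODE, square, Doob) is equivalent to the paper's Lemma 3, which applies Dynkin directly to $(m^l_{ik}-n^l_{ik}/N_{ik})^2$.

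The gap is in your handling of $W^N_{ik}$. You propose to subtract the two HJ equations, bound the extra term $\sum_{r,v}\rho^l_{ik}(v,r,\boldsymbol{n})(u^{N,\boldsymbol{n}+e,l}_{ik}-u^{N,\boldsymbol{n},l}_{ik})$ using Remark \ref{grad}, integrate the resulting linear ODE from $t$ to $T$, and then square. But that sum is \emph{not} $O(1/N)$: each difference is $O(1/N)$ by Remark \ref{grad}, while $\sum_{r,v}\rho^l_{ik}(v,r,\boldsymbol{n})=\sum_v n^v_{ik}\sum_r G^{vr}_{ik}$ is of order $N_{ik}$, so the product is $O(1)$. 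Integrating and squaring therefore leaves an $O(1)$ residual, not the claimed $C/N$, and the closing argument collapses. The paper avoids this by applying Dynkin's formula directly to the \emph{squared} quantity $\varphi_l(j,\boldsymbol{n}_{ik},t)=(u^l_{ik}(t)-u^{N,\boldsymbol{n},l}_{ik}(t))^2$. Then the time-derivative contribution $2(u^{N,\boldsymbol{n},l}_{ik}-u^l_{ik})\sum_{r,v}\rho^l_{ik}(u^{N,\boldsymbol{n}+e,l}_{ik}-u^{N,\boldsymbol{n},l}_{ik})$ coming from the finite HJ equation cancels against the first-order part of the jump term in the generator, and what survives is the ``quadratic variation'' piece $\sum_{j,y}n^y_{ik}G^{ik}_{yj}(u^{N,\boldsymbol{n}+e,l}_{ik}-u^{N,\boldsymbol{n},l}_{ik})^2$, which \emph{is} $O(1/N)$ since the squared differences are $O(1/N^2)$. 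That cancellation is the crux of Lemma 2 and is missing from your plan.

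A minor point you share with the paper's statement: bounding $1/N_{ik}$ and $1/N$ by $1/N_{\max}$ with $N_{\max}=\max_{(r,v)}N_{rv}$ goes the wrong way; the inequality one actually gets is in terms of $1/\min_{(r,v)}N_{rv}$ (equivalently $1/N$ times a constant depending on the limiting proportions $\pi_{ik}$).
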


\begin{proof}
The proof of Theorem 1 relies on the following two lemmas.
\begin{lem}
Define $T^*$ defined as  done in Remark \ref{grad}, then, there exists $C_1$ such that
\begin{equation}
W^N_{ik}(t)\leq \frac{C_1}{N}+C_1 \mathbb{E} \int_{t}^T \Big(W^N_{ik}(s) + \sum_{(r,v) \in \mathcal{C}}V^N_{rv}(s)\Big)ds.
\label{Wik}
\end{equation}
\end{lem}
\begin{proof}
See Appendix B.
\end{proof}
\begin{lem}
Define $T^*$ as done in Remark \ref{grad}, then, there exists $C_2$ such that
\begin{equation}
V^N_{ik}(t) \leq C_2 \mathbb{E} \int_{0}^t (V^N_{ik}(s)+W^N_{ik}(s)+V^N_{yz}(s)) ds +\frac{C_2}{N_{\max}}, \label{vik}
\end{equation}
where $(y,z)=\arg \max_{(r,v)} V_{rv}(t)$ and $N_{\max}=\max_{(r,v)\in \mathcal{C}}N_{rv}$.
\end{lem}
\begin{proof}
See Appendix C.
\end{proof}
By adding (\ref{Wik})  and (\ref{vik}) for all $(i,k)$, we have
\small
\begin{eqnarray}
\sum_{ik}W^N_{ik}(t)+\sum_{ik} V^N_{ik}(t) &\leq& C_1 \mathbb{E} \int_{0}^t \sum_{ik}\Big(W^N_{ik}(s)+\sum_{rv}V^N_{rv}(s)\Big)
+\frac{C_1 |\mathcal{C}|}{N}ds \nonumber\\
&&+ C_2 \mathbb{E} \int_{t}^T \sum_{ik}\Big(W^N_{ik}(s)+V^N_{ik}(s)+V^N_{yz}(s)\Big)+ \frac{C_2|\mathcal{C}|}{N_{\max}},\nonumber\\
&\leq& \bar{C} \mathbb{E} \int_{0}^T \sum_{ik}V^N_{ik}(s)+W^N_{ik}(s) + \frac{\bar{C}}{N_{\max}},
\end{eqnarray}
\normalsize
where $\bar{C}=\max\{C_1|\mathcal{C}|,C_2+1, C_2|\mathcal{C}|\}$. \\
Let $W^N_{ik}+V^N_{ik} =\max_{0 \leq t \leq T}W^N_{ik}(t)+V^N_{ik}(t)$. Then,
\begin{eqnarray}
\sum_{ik}W^N_{ik}(t)+V^N_{ik}(t) \leq \sum_{ik} W^N_{ik}+V^N_{ik} \leq \bar{C}T\sum_{ik}W^N_{ik}+V^N_{ik}+\frac{\bar{C}}{N_{\max}} \leq \frac{\bar{C}}{(1-\mu)N_{\max}},
\end{eqnarray}

where $\mu=\bar{C}T$. Thus, the value function and the proportion of nodes converges uniformly in distribution to the meanfield case. Hence, the mean-field equilibrium constitutes an $\epsilon$ equilibrium for the finite game as demonstrated in \cite{mean-fieldSIR}.
\end{proof}
\vspace{-0.3 cm}
Thus in this section, we have demonstrated that finding the equilibrium for the finite game has exponential complexity in the number of IoBT nodes. However, we have also show that, under mild conditions, the finite game will converge to the mean-field game as $N$ goes to infinity.

\vspace{-0.6 cm}

\section{Simulation Results}
\vspace{-0.1 cm}
For our simulations, we consider an IoBT where the values of the degree of the nodes $k \in \{1,10, 15, 20\}$. The degree distribution is $
P(k=1)=0.4$, $P(k=10)=0.3$, $P(k=15)=0.2$, and $P(k=20)=0.1$. 
The distribution is chosen such that the proportion of nodes decreases with the degree, which represents a typical hierarchical IoBT structure. We consider one type of device of each degree. Thus, nodes of degree $1$ correspond to simple sensors. Nodes of degrees $10$ and $15$ correspond to cluster heads, and nodes of degree $15$ correspond to local sinks. The infection costs are set to:
$c_{1}=1$, $c_{10}=10$, $c_{15}=20$,  and $c_{20}=30$. The cost values are chosen depending on the importance of the nodes.  For a node of degree $k$,  the target QoI $Q_T=k$. The attacker's infection rate for all nodes  is set to be $0.2$. The time period $T$ is set to be $0.9$ seconds. 
The delays of the $\emph{E}$ and $\emph{L}$ states are set to be
$\delta_1=0$, $\delta_{10}=0.4$, $\delta_{15}=0.3$, and $\delta_{20}=0.3$. The acceptance probabilities of the $\emph{E}$ state are set to $\beta^E_1=0.5$, $\beta^E_{10}=0.3$, $\beta^E_{15}=0.2$, and $\beta^E_{20}=0.1$. The acceptance probabilities when in $\emph{L}$ state are set to $\beta^L_1=0.5$, $\beta^L_{10}=0.6$, $\beta^L_{15}=0.7$, and $\beta^L_{20}=0.8$. The parameters are chosen to reflect the computational capabilities of the different IoBT nodes. For example, $\beta^E_1=\beta^L_1=0.5$ is chosen for simple sensors that can not identify the misinformation. Thus, such sensors accept/reject the information with probability $0.5$, and, hence, the processing delay $\delta_1$ is set to $0$.
For comparison, we consider a baseline in which the nodes always transmit with probability one.
For the considered simulation values, we compute the equilibrium acceptance probability using Algorithm 1. 
We also compute the proportion of infected nodes, the probability of an infected link, and the QoI for both the baseline and the MFE. For all considered values, Algorithm 1 converges in at most $16$ iterations. Also, Algorithm 1 always yields the same solution for any initial guess of the acceptance probabilities.

\begin{figure}[t]{
	\centering
	\includegraphics[width=9 cm,height=5.3cm,angle=0]{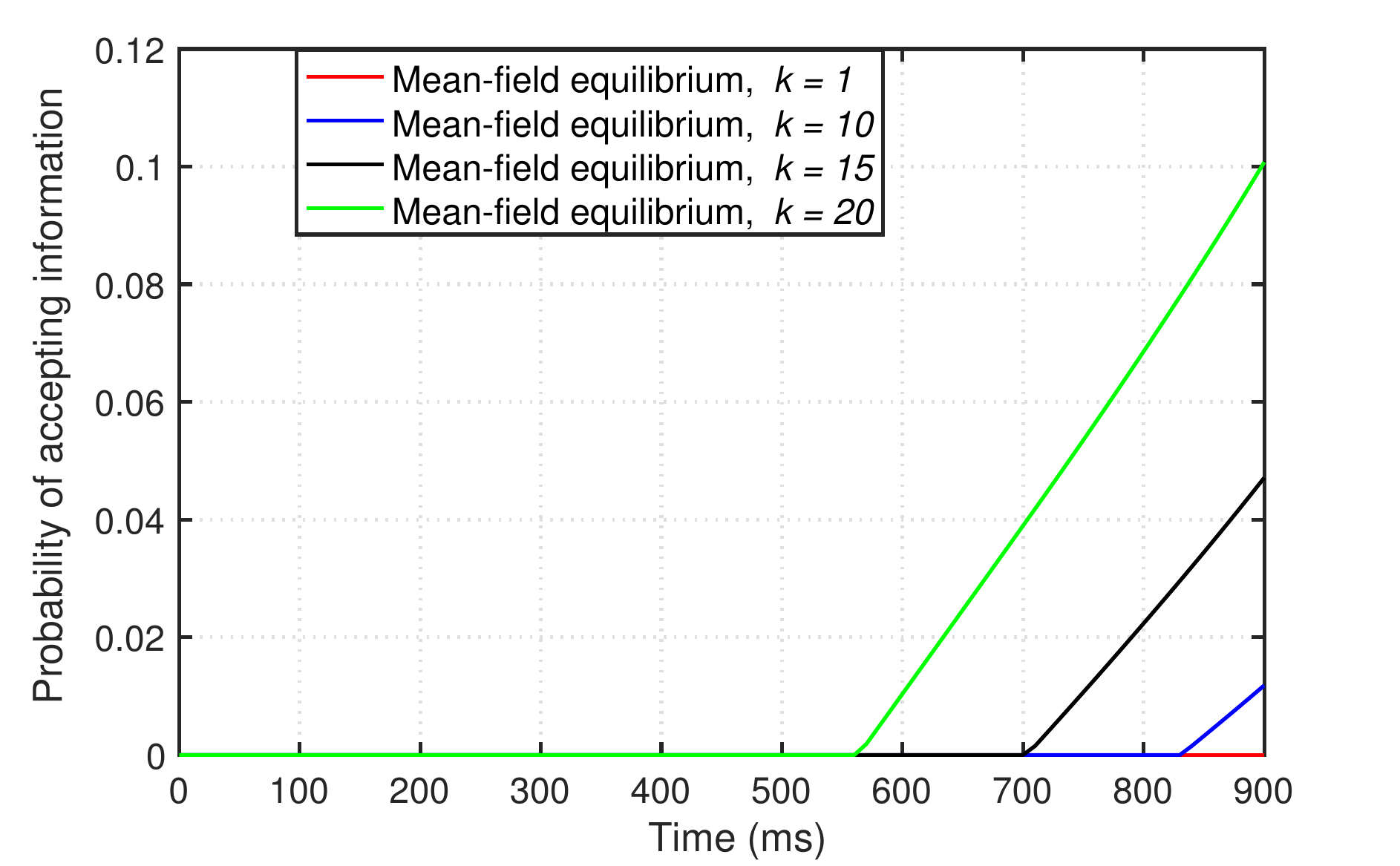}
	\caption{The acceptance probability  as a function of time.}\label{acceptance}
	}\vspace{-0.4 cm}
\end{figure}

\begin{figure}[t]{
	\centering
	\includegraphics[width=9 cm,height=5.3cm,angle=0]{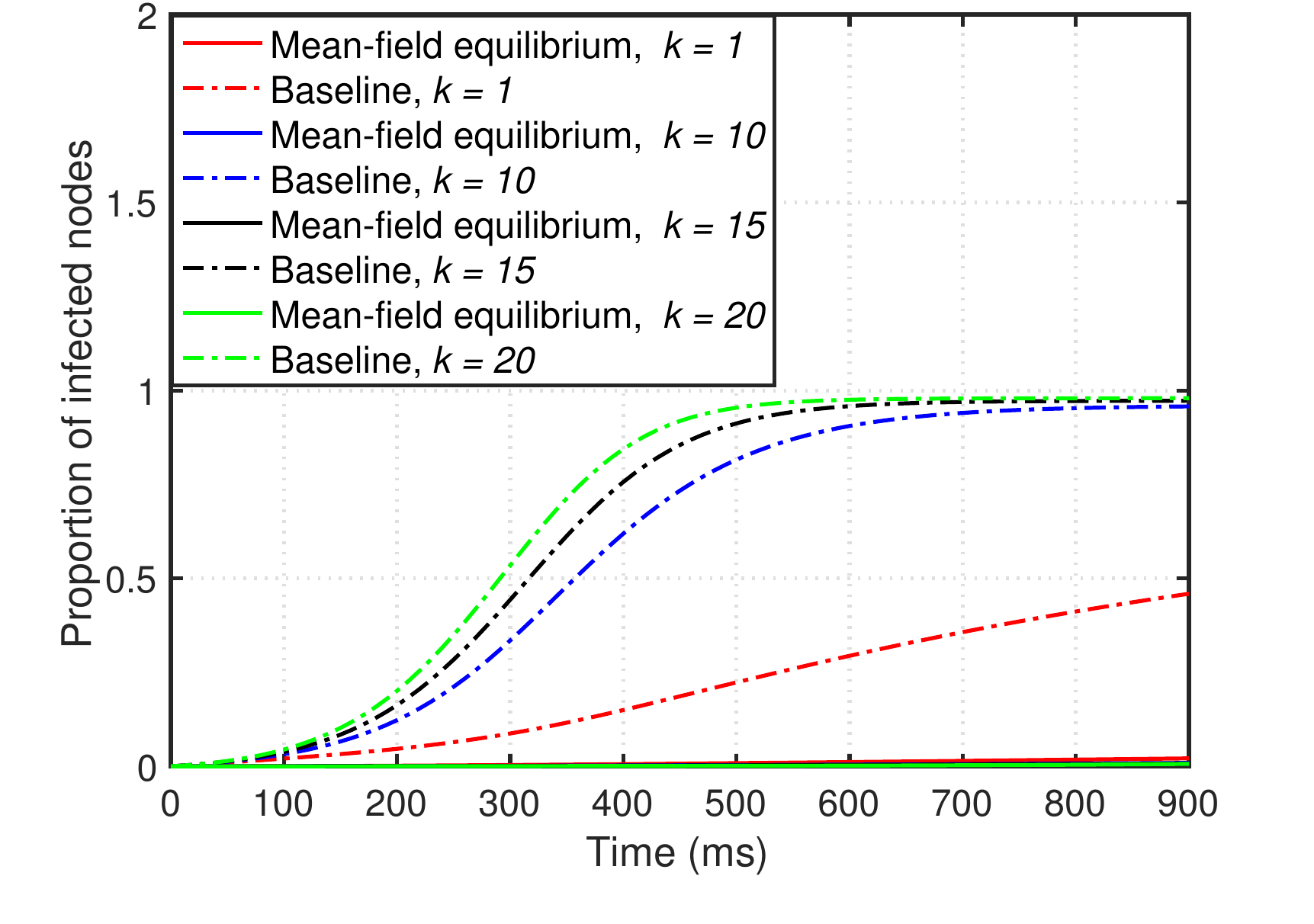}
	\caption{Evolution of the proportion of infected nodes  over time.}\label{infected}
	}\vspace{-0.4 cm}
\end{figure}

\begin{figure}[t]{
	\centering
	\includegraphics[width=9 cm,height=5.3 cm,angle=0]{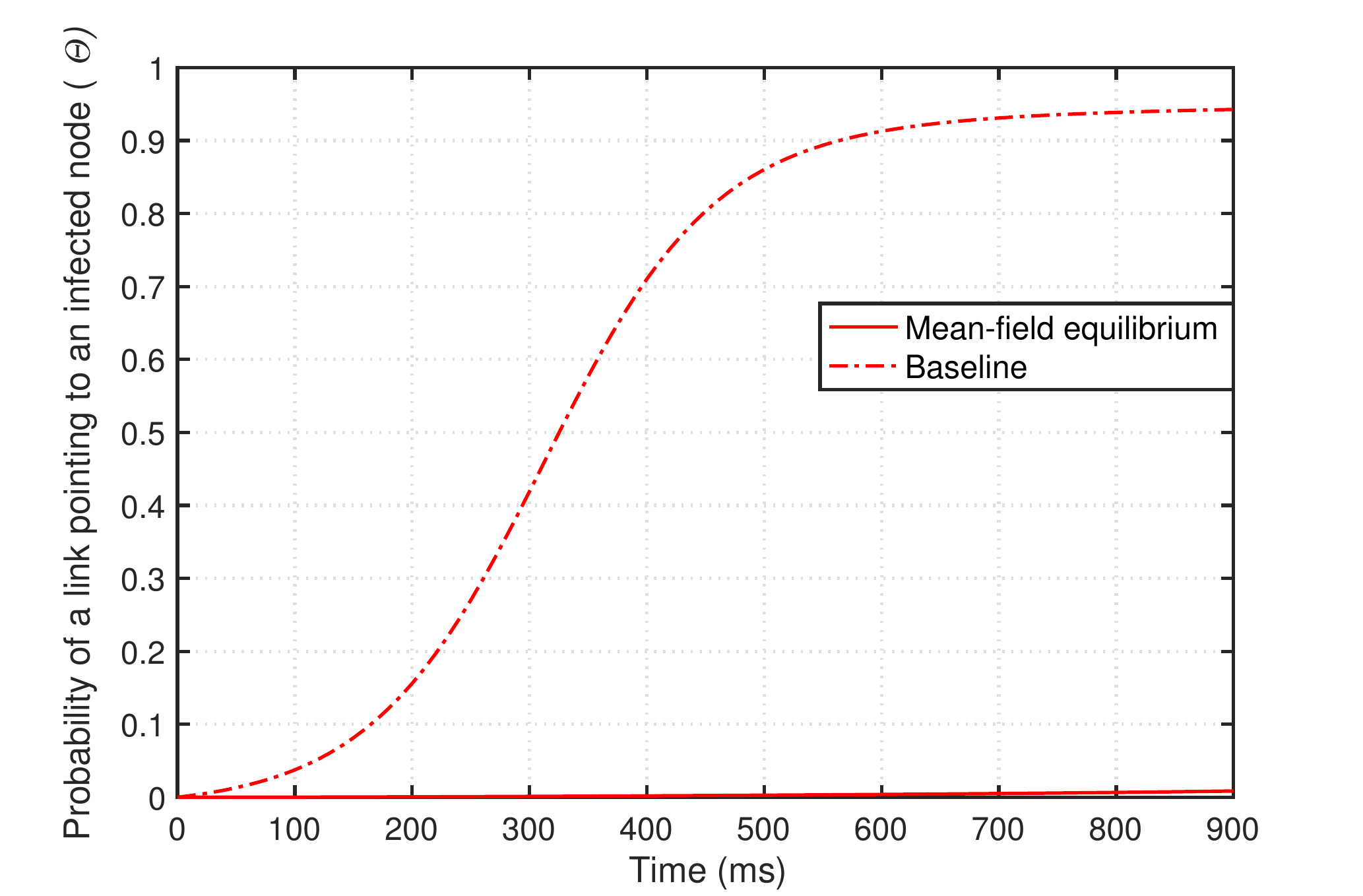}
	\caption{Evolution of the probability of an infected link  over time.}\label{theta}
	}\vspace{-0.7 cm}
\end{figure}

\begin{figure}[t]{
	\centering
	\includegraphics[width=9 cm,height=5.3cm,angle=0]{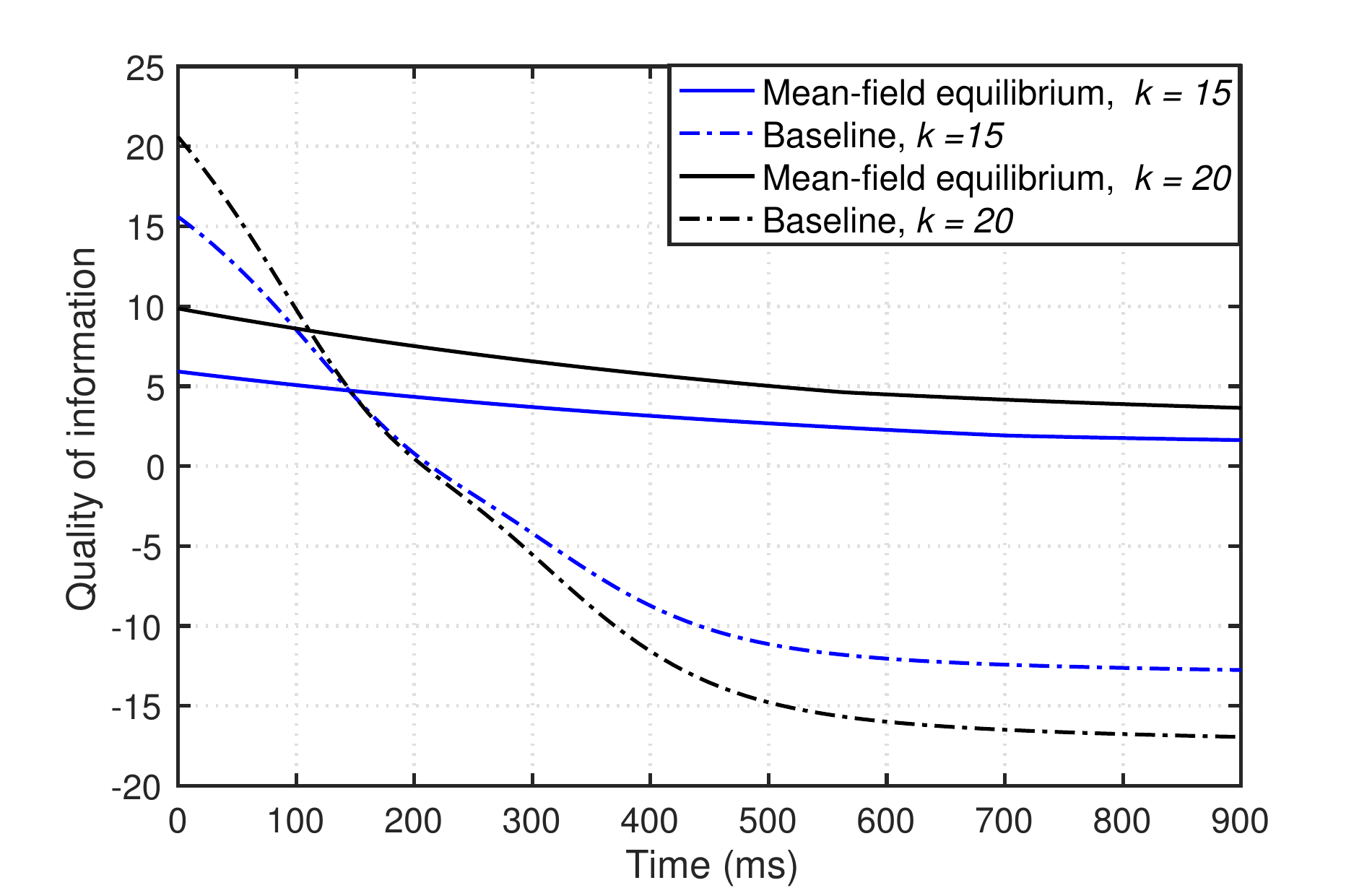}
	\caption{Time evolution of the QoI resulting from the proposed MFE and the baseline.}\label{qoI}
	}\vspace{-0.3 cm}
\end{figure}

Fig. \ref{acceptance} shows the MFE acceptance probability $\alpha$ versus time for the considered values of degree $k$. First, when $k=1$, the acceptance probability is zero for the entire time duration, since the processing delay $\delta_1$ is $0$. Thus, a node of degree $1$ can reduce the spread of misinformation by accepting the information with probability $\beta^E_1=0.5$ instead of $1$.
When $k=10$, the acceptance probability is $0$ for $t \leq 0.83$ seconds. Then, it increases with time until it reaches 0.01 at $t=0.9$, as the spread of misinformation ceases in the IoBT.
When $k=15, 20$, the acceptance probability varies similar to the case when $k=10$.


Fig. \ref{infected} shows how the proportion of infected nodes changes over time for for the considered values of degree $k$ and for both the baseline as well as the MFE. Using the baseline and for the considered degree values, the  proportion of infected nodes increases with time until it reaches $0.45$, $0.95$, $0.97$, and $0.98$ at $t=0.9$ for $k=$$\hspace{0.1 cm}1$, $10$, $15$, and $20$, respectively.
From Fig. \ref{infected} we can also see that, using the MFE and for all considered degree values,  the  proportion of infected nodes increases with time until it reaches $0.0212$, $0.009$, $0.0078$, and $0.0065$ at $t=0.9$ for $k=$$\hspace{0.1 cm}1$, $10$, $15$, and $20$, respectively. Thus, Fig. \ref{infected} shows that, for all considered degree values, the proportion of infected nodes using the MFE is maintained significantly lower than the baseline case.  The considerable decrease in the the proportion of infected nodes is due to two reasons 1) At the MFE, the acceptance probabilities of information for all nodes when in the \emph{S} is zero for a considerable time duration as shown in Fig. \ref{acceptance} 2) The acceptance probabilities of misinformation when in the $\emph{E}$ are low, which limits the spread of misinformation.  The decrease in the proportion of infected nodes reaches up to $99\%$ when $k=15$.

In Fig. \ref{theta}, we show the probability of an infected link $\Theta$ over time for both the baseline and the MFE. From this figure, we can see that, for the baseline, $\Theta$ increases with time until it reaches $0.94$ at $t=0.9$ seconds. This is due to the fact that the proportion of infected nodes increases with time using the baseline, for all considered degree values as shown in Fig. \ref{infected}. For the MFE, from Fig. \ref{theta} we can see that $\Theta$  increases with time until it reaches $0.0085$ at $t=0.9$ seconds.  The decrease in $\Theta$ using the MFE reaches up to $99\%$ compared to the baseline. Thus, Fig. \ref{theta} shows the effectiveness of our proposed scheme in limiting the spread of misinformation.

Fig. \ref{qoI} shows the evolution of the QoI over time for both the baseline and the MFE for degree values $k=15$ and $20$, respectively. First when $k=15$ and using the baseline, the QoI will be $16$ at $t=0$. Then, the QoI decreases until it reaches $-12.75$ at $t=0.9$ sec. The decrease in the QoI is due to the increase in the probability $\Theta$ as shown in Fig. \ref{theta}.  When $k=15$ and using the MFE, the QoI is $6$ at $t=0$, since initially all nodes are susceptible and the acceptance probability is zero as shown in Fig. \ref{acceptance}. Then, the QoI decreases with time until it reaches $1.628$ at $t=0.9$ seconds. The decrease in the QoI is mainly due to the increase in the probability $\Theta$ with time, as shown in Fig. \ref{theta}.  When $k=20$ , the QoI resulting from the baseline is $21$ at $t=0$. Then, the QoI decreases with time until it reaches $-17$ at $t=0.9$. When $k=20$ , the QoI resulting from the proposed MFE is $9.8$ at $t=0$. As $t$ increases, the QoI decreases until it reaches $3.64$. The decrease in the QoI when $k=20$ for the baseline and the MFE is due to the same aformentioned reasons for the case when $k=15$.
Thus, the proposed MFE approach achieves a 1.2-fold increase in the value of the QoI compared to the value of the baseline, at $t=0.9$ and when $k=20$.


\begin{figure}[t]{
	\centering
	\includegraphics[width=9 cm,height=5.3cm,angle=0]{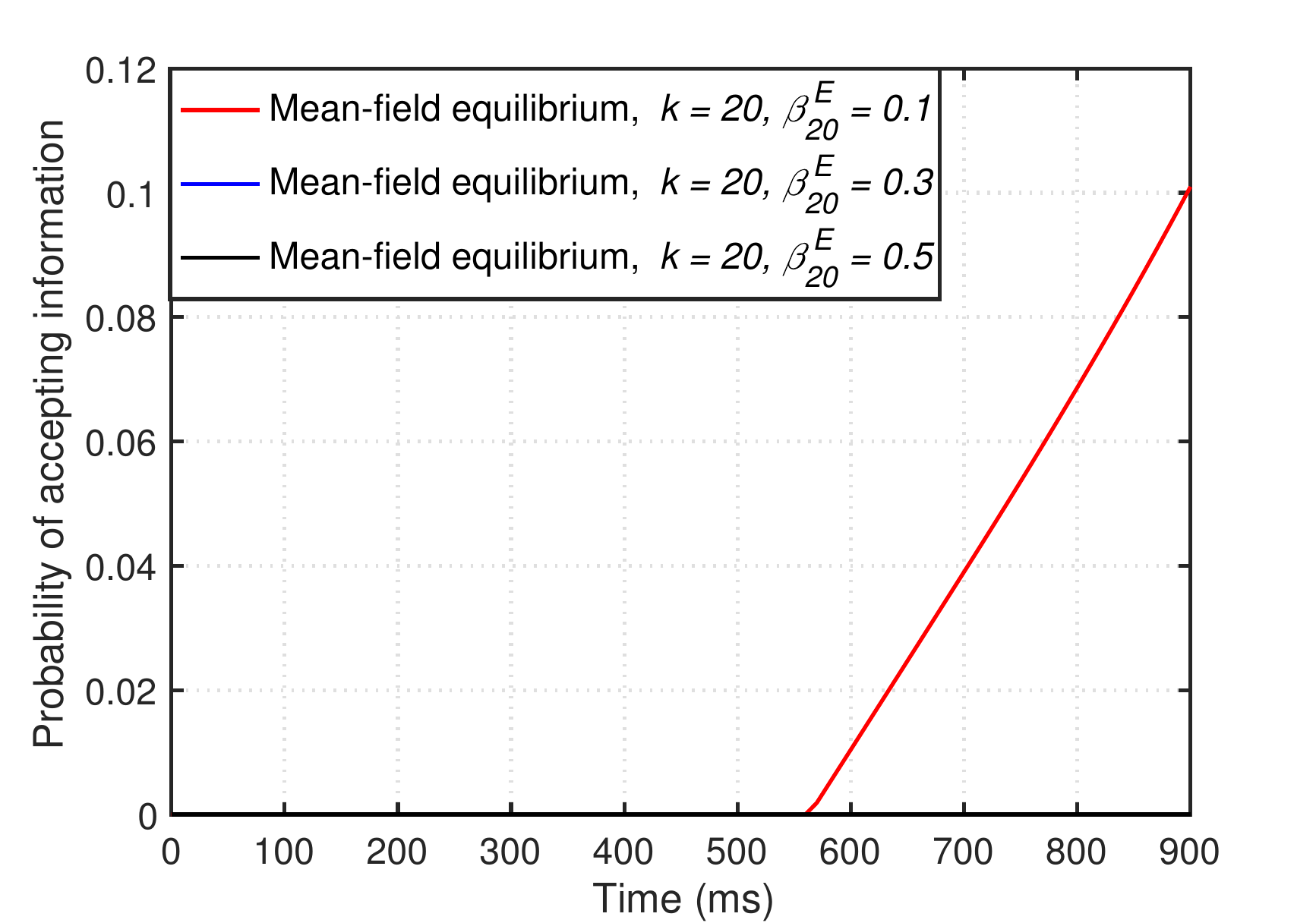}
	\caption{Time evolution of the acceptance probability of nodes with degree $20$  for different values of $\beta^E_{20}$.}\label{alphabeta}
	}\vspace{-0.5 cm}
\end{figure}

Fig. \ref{alphabeta} shows the MFE probability of accepting information by IoBT nodes with degree $20$ over time for $k=20$ and when the value of the $\beta^E_{20}$ is $0.1$, $0.3$, and $0.5$. When $\beta^E_{20}=0.1$,  the acceptance probability is zero for $t \leq 0.56$, respectively. Then, the acceptance probability increases with time until it reaches $0.1$ at $t=0.9$ seconds. When the value of $\beta^E_{20}=0.3,0.5$, the acceptance probability is zero for the entire duration. Thus, Fig. $\ref{alphabeta}$ shows that the acceptance probability decreases as the node has higher capability to identify the misinformation.

\begin{figure}[t]{
	\centering
	\includegraphics[width=9 cm,height=5.3cm,angle=0]{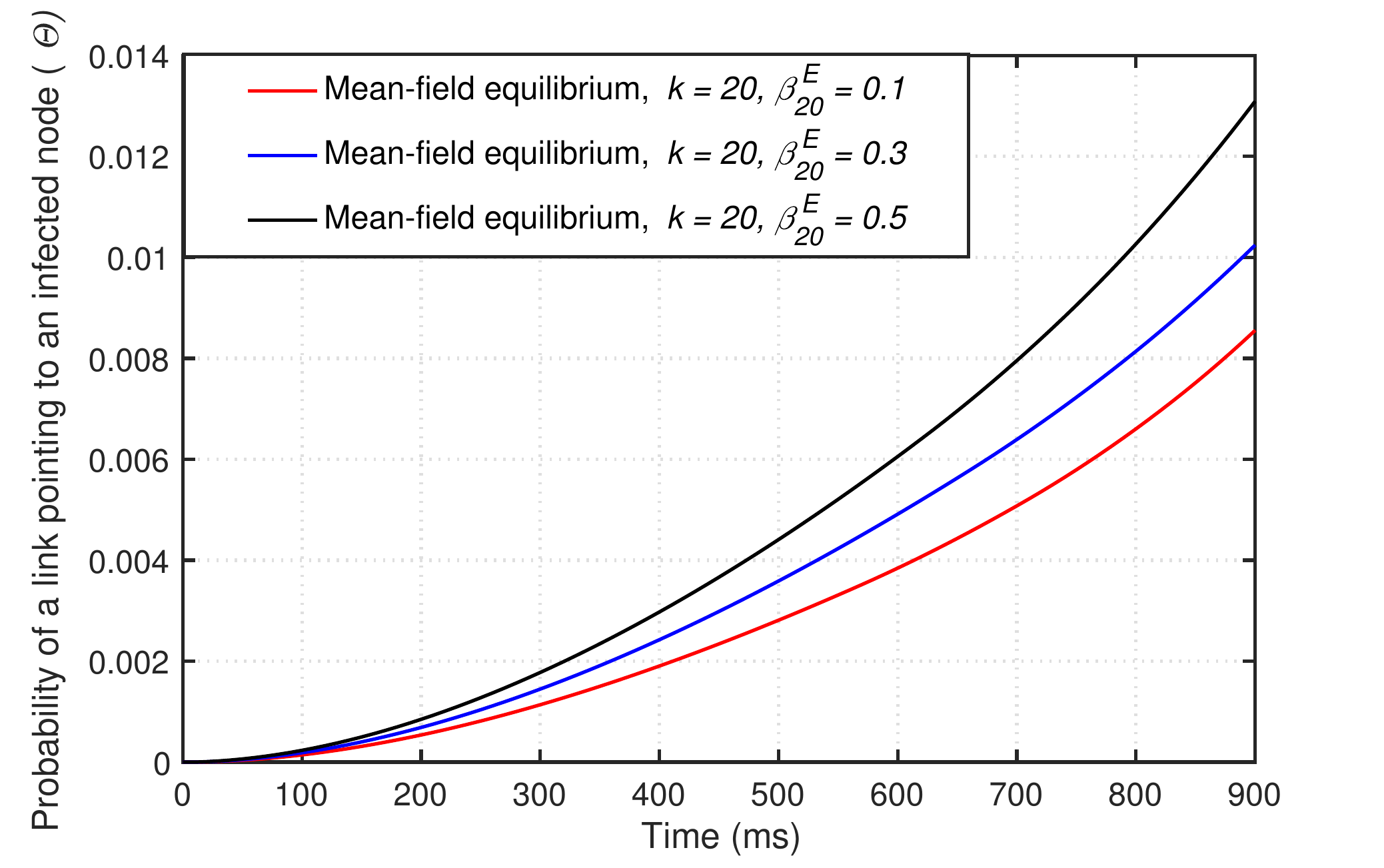}
	\caption{Time evolution of the probability $\Theta(t)$  for different values of $\beta^E_{20}$.}\label{thetabeta}
	}\vspace{-0.3 cm}
\end{figure}

\begin{figure}[t]{
	\centering
	\includegraphics[width=9 cm,height=5.3cm,angle=0]{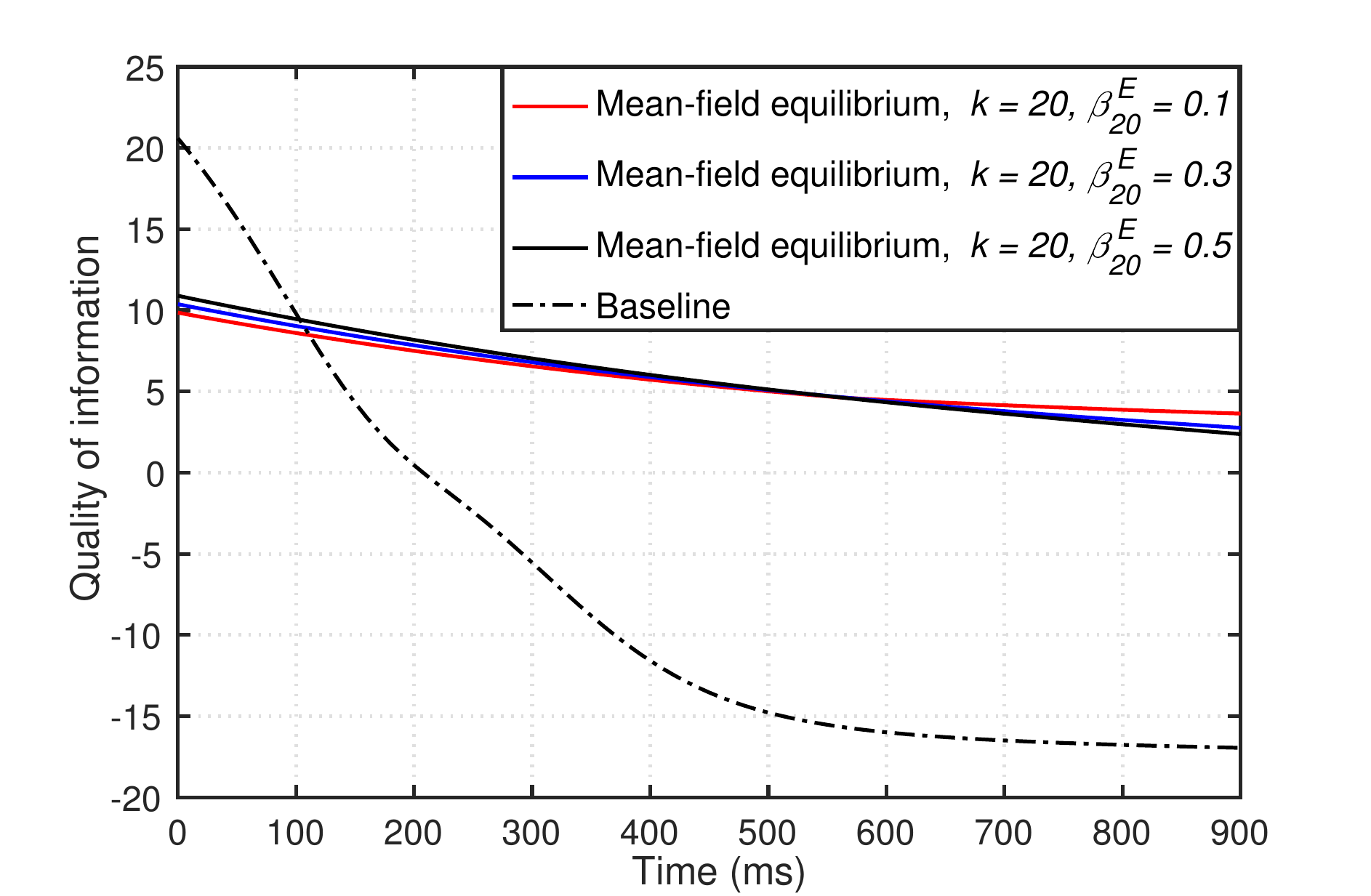}
	\caption{Time evolution of the QoI of nodes with degree $20$  for different values of $\beta^E_{20}$.}\label{qoIbeta}
	}\vspace{-0.5 cm}
\end{figure}

Fig. \ref{thetabeta} shows the probability $\Theta$ of a link pointing to an infected node over time for $k=20$ and when the value of the probability $\beta^E_{20}$ is $0.1$, $0.3$, and $0.5$, respectively. For the three considered values of $\beta^E_{20}$, the probability $\Theta$ increases with time. From Fig. \ref{thetabeta}, we can see that, when $\beta^E_{20}=0.1$, the probability $\Theta$ increases until it reaches $0.0085$ at $t=0.9$. Meanwhile, for $\beta^E_{20}=0.3$ and $0.5$, the probability $\Theta$ reaches $0.01$ and $0.013$, respectively at $t= 0.9$ seconds.
As for the baseline, the probability $\Theta$ is not affected by $\beta^E_4$ since the nodes do not reach the $\emph{E}$ or $\emph{L}$ states. Thus, $\Theta(t)$ is the same as the one shown in Fig. \ref{theta}.
Thus, the MFE shows a considerable decrease in $\Theta$ even in the case when the nodes having degree $k=20$ cannot identify the misinformation (i.e. when $\beta^E_{20}=0.5$). In this case, the decrease in $\Theta$ is $97\%$ compared to the baseline. Further, Fig.  \ref{thetabeta} clearly demonstrates that the spread of misinformation becomes more limited when the IoBT nodes have a higher capability to identify misinformation.

In Fig. \ref{qoIbeta}, we plot the evolution of the QoI over time, for $k=20$ for the three values of $\beta^E_{20}$: $0.1$, $0.3$, and $0.5$.  Fig. \ref{qoIbeta} shows that, for $\beta^E_{20}=0.1$, the QoI is the same as in Fig. \ref{qoI}. However, when $\beta^E_{20}=0.3$, the QoI decreases with time until it reaches $2.7$ at $t=0.9$ seconds. When $\beta^E_{20}=0.5$, the QoI decreases with time until it reaches $2.37$ at $t=0.9$ seconds. 
Using the baseline, the QoI  is not affected by $\beta^E_4$. Thus, as demonstrated earlier in Fig. \ref{theta}, the QoI decreases until it reaches $-17$ at $t=0.9$. Fig. \ref{qoIbeta} further shows that for $t \leq 0.57$, the QoI increases with $\beta^E_{20}$ due to the low value of the probability $\Theta$. Thus, a higher value of $\beta^E_{20}$ will result in a higher QoI. For $t \leq 0.57$, the QoI decreases with $\beta^E_{20}$ since, in the considered time duration, $\Theta$ increases at a faster rate as shown in Fig. \ref{thetabeta}.
Also, Fig. \ref{qoIbeta} shows the improvement in the QoI compared to the baseline even when the nodes of degree $k=20$ are not able to identify the misinformation. In this case, the improvement in the QoI reaches up to $99\%$ when $\beta^E_{20}=0.1$ and at $t=0.9$ sec. 

\begin{figure}[t]{
	\centering
	\includegraphics[width=9 cm,height=5.3cm,angle=0]{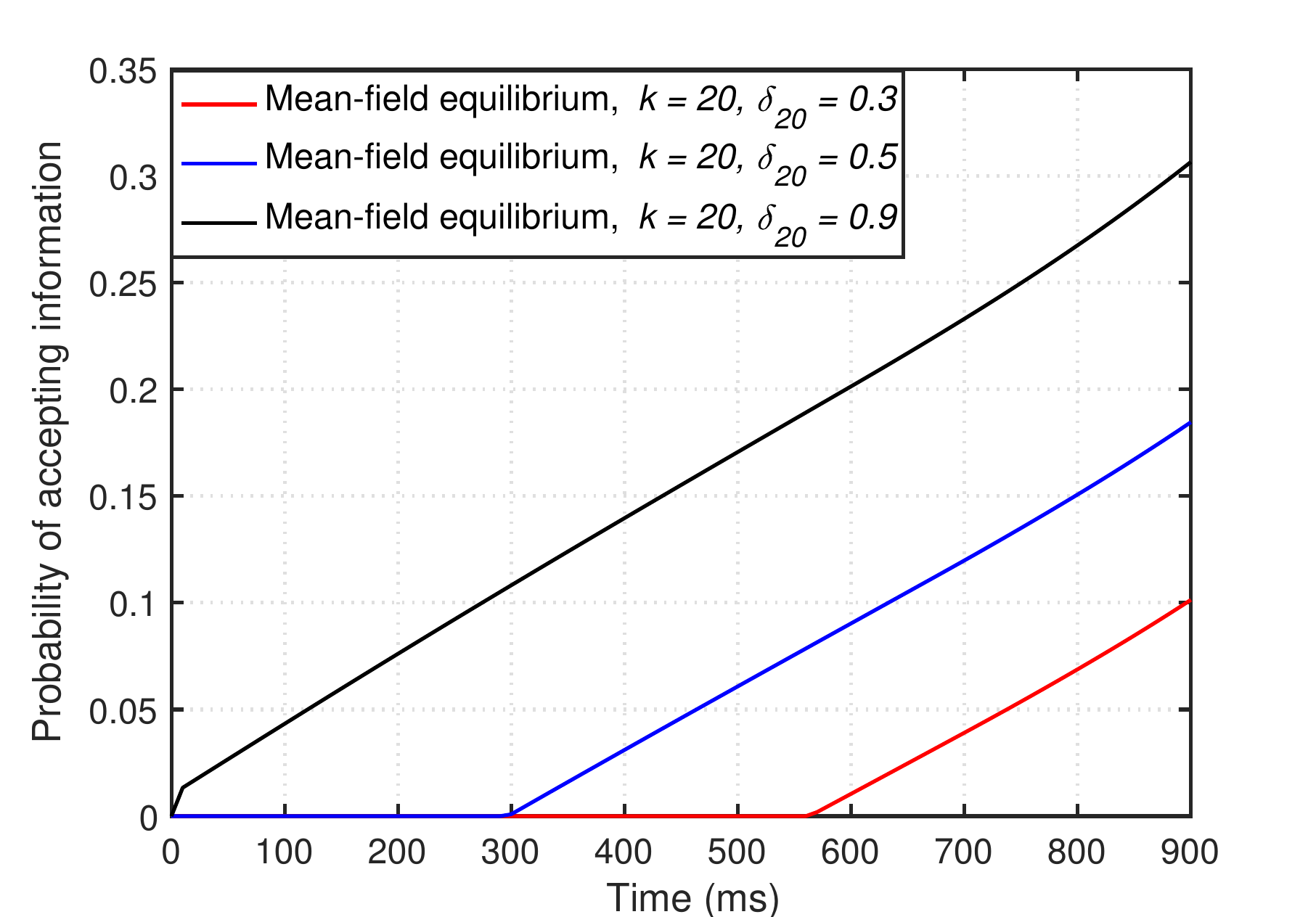}
	\caption{Time evolution of the acceptance probability of nodes with degree $20$ for different values of 
$\delta_4$.}\label{alphadelta}
	}\vspace{-1 cm}
\end{figure}

Fig. \ref{alphadelta} shows the MFE probability of accepting information by IoBT nodes with degree $20$ over time for $k=20$ and when the value of the delay $\delta_{20}$ is $0.3$, $0.5$, and $0.9$. When $\delta_{20}=0.3, 0.5$,  the acceptance probability is zero for $t \leq 0.3$ and  $t \leq 0.56$, respectively. Then, the acceptance probability increases with time until it reaches $0.1$ and $0.1846$, respectively. When the value of $\delta_{20}=0.9$, the acceptance probability is zero at $t=0$. Then, it increases with time until it reaches $0.3$ at $t=0.9$. Fig.  \ref{alphadelta} shows that, at MFE, the acceptance probability increases with the delay $\delta_{20}$ in order to prevent the QoI from deteriorating due to the age of information.

\begin{figure}[t]{
	\centering
	\includegraphics[width=9 cm,height=5.3cm,angle=0]{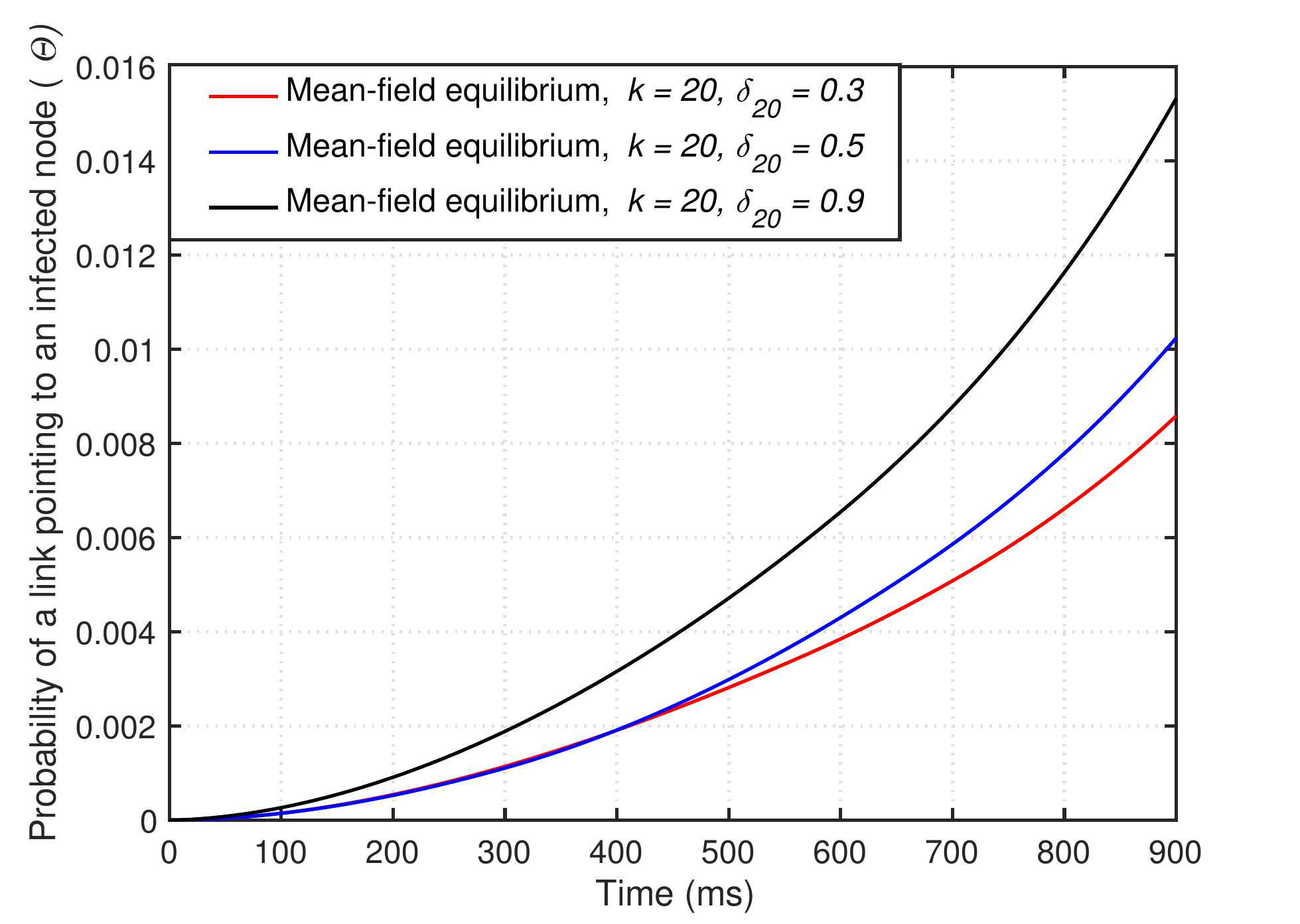}
	\caption{Time evolution of $\Theta(t)$ of nodes with degree $20$ for different values of 
$\delta_{20}$.}\label{thetadelta}
	}\vspace{-0.5 cm}
\end{figure}

\begin{figure}[t]{
	\centering
	\includegraphics[width=9 cm,height=5.3cm,angle=0]{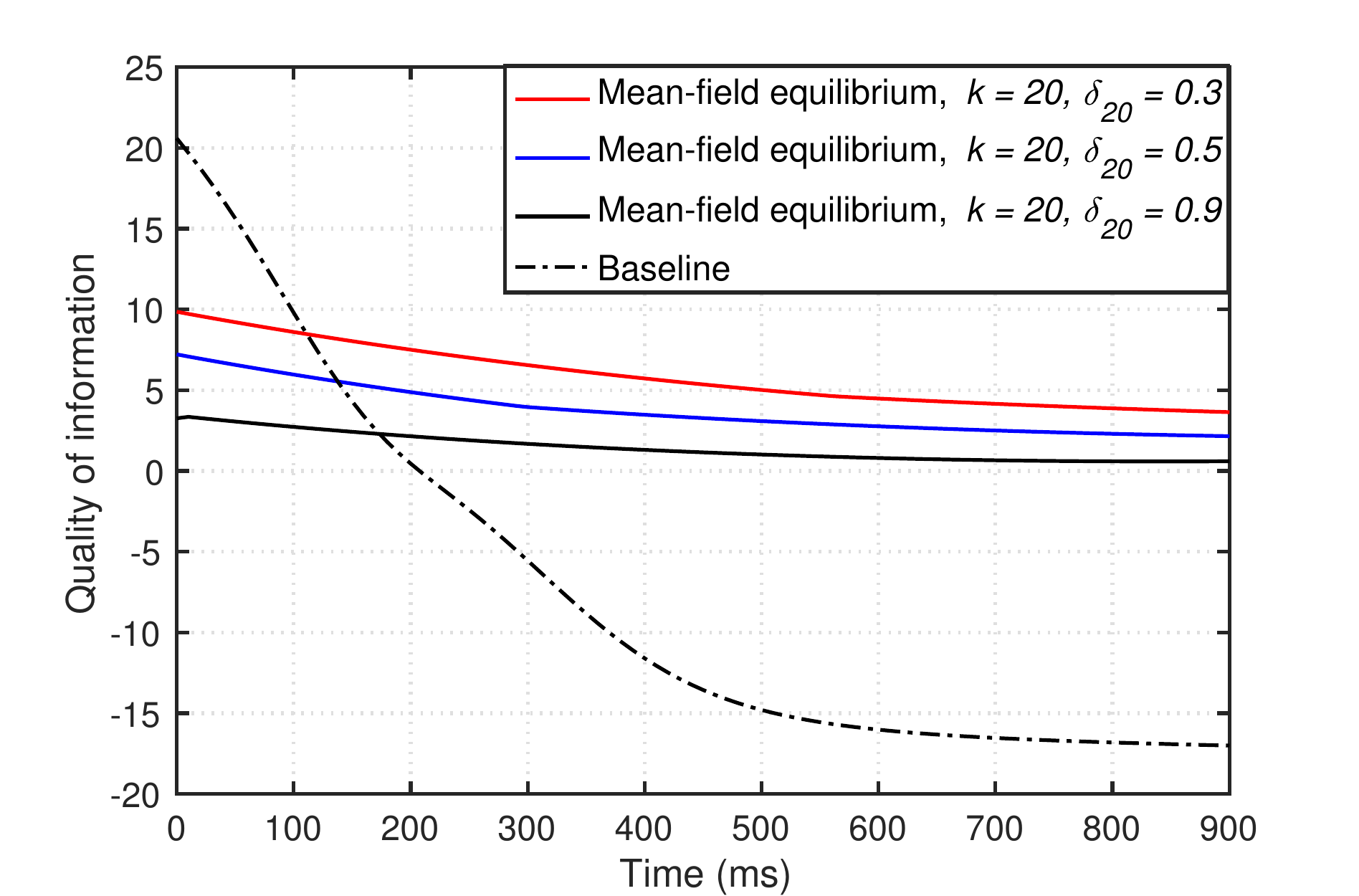}
	\caption{Time evolution of QoI of nodes with degree $20$ for different values of 
$\delta_{20}$.}\label{qoIdelta}
	}\vspace{-0.4 cm}
\end{figure}

Fig. \ref{thetadelta} shows the probability $\Theta$ of a link pointing to an infected node versus time for $k=20$ and when the value of the delay $\delta_{20}$ is $0.3$, $0.5$, and $0.9$.  For the three considered values of $\delta_{20}$, Fig. \ref{thetadelta} shows that the probability $\Theta$ increases with time. When $\delta_{20}=0.3$ and $t=0$, the probability $\Theta$ is $0$. Then, it increases with time until it reaches $0.0085$ at $t=0.9$ seconds. When $\delta_4=0.5$, the probability $\Theta$ increases with time until it reaches $0.01$ at $t=0.9$ . Similarly, when $\delta_{20}=0.9$, the probability $\Theta$ reaches $0.015$ at $t=0.9$. 
Also, as shown in Fig. \ref{thetadelta} the probability $\Theta$ increases with an increase in $\delta_{20}$. The increase in $\Theta$ is due to the fact that the acceptance probability increases with $\delta_{20}$, as shown in Fig. \ref{alphadelta}.
For the baseline, the probability $\Theta$ is not affected with change in $\delta_{20}$ since the nodes do not enter the $\emph{E}$ or $\emph{L}$ states. Thus, $\Theta(t)$ is the same as the one shown in Fig. \ref{theta}. Thus, Fig. \ref{thetadelta} confirms that the proposed MFE approach can achieve a significant decrease in the probability $\Theta$ compared to the baseline, reaching up to $99\%$ when $\delta_{20}=0.3$ and at $t=0.9$ seconds.

\vspace{-0.1 cm}
Fig. \ref{qoIdelta} shows how the QoI resulting from the MFE and the baseline will vary over time, for $k=20$, and for $\delta_{20} =0.3$, $0.5$, and $0.9$ for both the MFE and the baseline. When $\delta_4=0.3$ and for the MFE, the QoI is $10$ at $t=0$. Then, as time increases, the QoI decreases until it reaches $3.64$ at $t=0.9$ seconds. When $\delta_{20}=0.5,0.9$, the evolution of QoI is similar to the case when $\delta_{20}=0.3$ and the minium value of QoI is $2.15$ and $0.6$, respectively at $t=0.9$. Using the baseline, the QoI is not affected with $\delta_{20}$. Hence as demonstrated earlier in Figs. \ref{qoI} and \ref{qoIbeta}, the QoI decreases with time until it reaches $-17$ at $t=0.9$.
Thus, Fig. \ref{qoIdelta} shows that the QoI deteriorates with an increase in the information processing delay, due to the increase in the probability $\Theta$ as shown in Fig. \ref{thetadelta}. Nonetheless, the MFE maintains a significant gain in the QoI compared to the baseline even with high information processing delays. In  particular, the MFE achieves a 1.2-fold increase in the quality of information when $\delta_4=0.3$ and $t=0.9$ compared to the baseline.
\begin{figure}[t]{
	\centering
	\includegraphics[width=9 cm,height=5.3cm,angle=0]{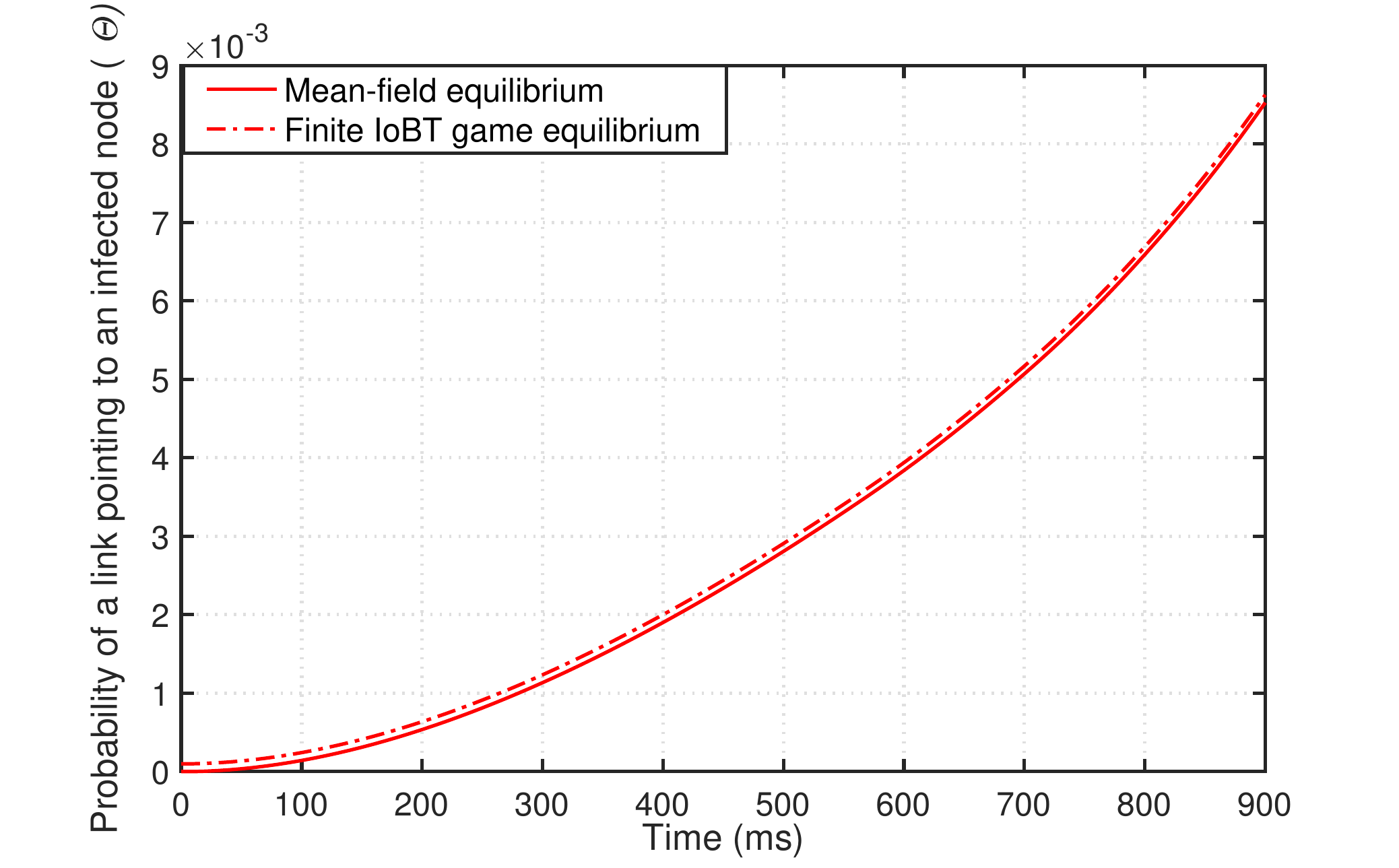}
	\caption{Time evolution of the probability $\Theta(t)$ for the IoBT mean-field game and the finite IoBT game.}\label{thetasim}
	}\vspace{-1 cm}
\end{figure}

\begin{figure}[t]{
	\centering
	\includegraphics[width=9 cm,height=5.3cm,angle=0]{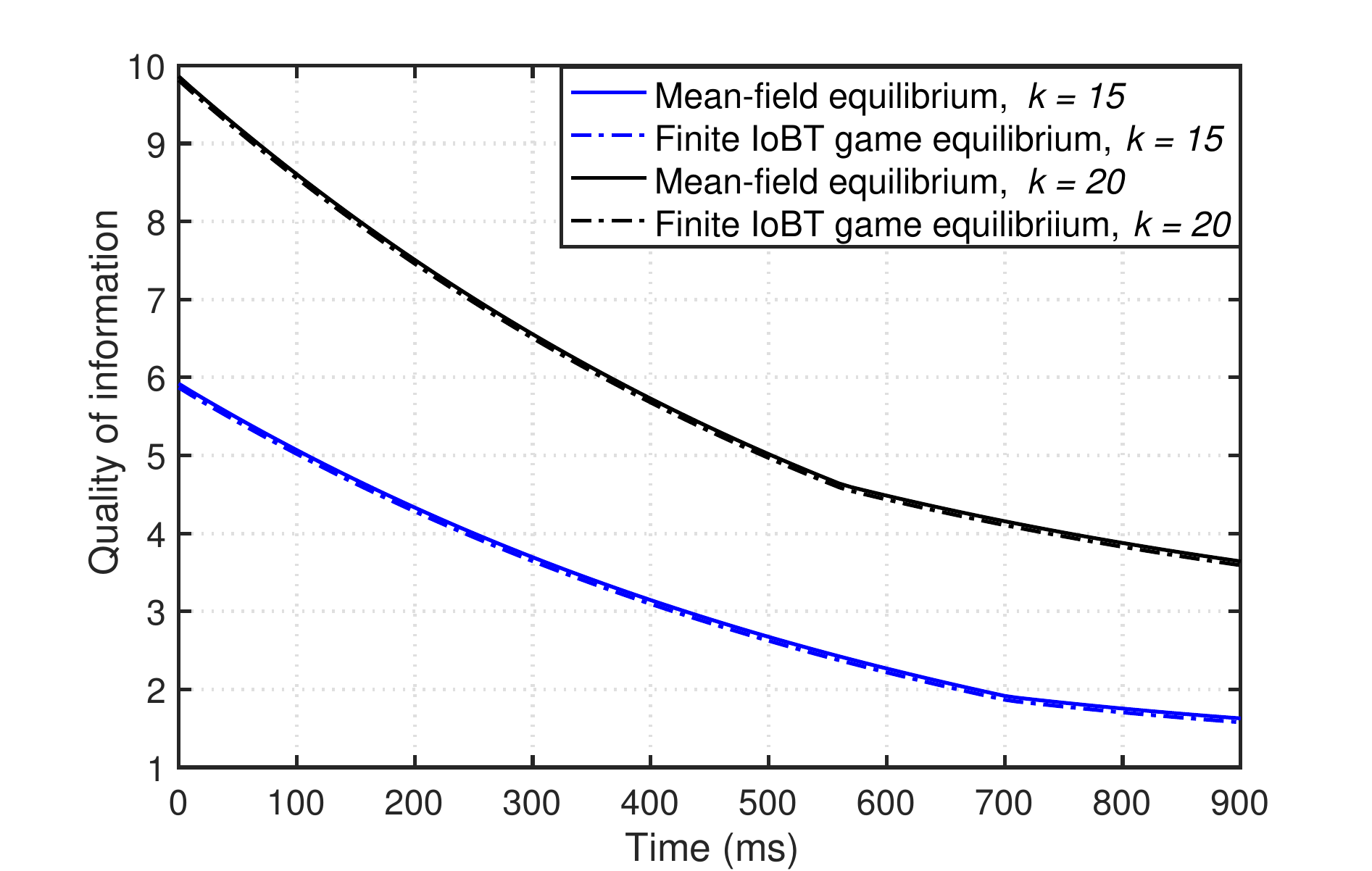}
	\caption{Time evolution of the QoI for the IoBT mean-field game and the finite IoBT game.}\label{qoIsim}
	}
\end{figure}

\vspace{-0.1 cm}
Next, for the considered simulation values, we consider for the finite IoBT case when $N=10,000$ and compute, using Algorithm 1 tailored to the finite game, the values of the probability $\Theta$ and the QoI at equilibrium shown in Figs. \ref{thetasim} and \ref{qoIsim}. Figs. \ref{thetasim} and \ref{qoIsim} shows that the values of $\Theta$ and the QoI of the finite IoBT game coincide with the values of the mean-field game, which confirms the convergence of the finite game to the mean-field game for large $N$.

\vspace{-0.4 cm}
\section{Conclusion}
In this paper,  we have considered the problem of misinformation propagation in an IoBT in which the nodes seek to determine the optimal probability of accepting the information. We have formulated the problem as a finite state mean-field game with multiclass agents. We have proposed an algorithm based on the forward backward sweep algorithm to find the mean-field equilibrium. We have analyzed the finite IoBT game and derived the conditions of convergence of the finite IoBT game to the mean-field game as the number of nodes tends to infinity. Our results have shown our proposed scheme can achieve a $1.2$-fold increase in the QoI compared to the value of the baseline when the nodes are transmitting. Further, our proposed scheme can reduce the proportion of infected nodes by $99\%$ compared to the baseline.

\vspace{-0.3 cm}

\appendix
\vspace{-0.3 cm}
\section*{Appendix A: Proof of Proposition 1}

Each characteristic, presented in Proposition 1, of the studied IoBT game is proven as follows:

\begin{enumerate}
\item The transitional rate $G^{jl}_{ik}(\alpha_{ik}(t), \Theta(t))$  is only a function of $\alpha_{ik}(t)$ when $j=\emph{S}$, and, in this case, it is a linear function of $\alpha_{ik}(t)$ and therefore Lipschitz in $\alpha_{ik}(t)$.

\item Proving that the best response $\alpha^*_{ik}(\Delta_l \boldsymbol{u}_{ik}, \boldsymbol{\gamma}(t))$ is Lipschitz in $\Delta_l \boldsymbol{u}_{ik}$, $\Theta(t)$, $\eta(t)$ can be shown using a similar proof as \cite[Proposition 1]{finitemean-field} and using the fact the  transitional rate $G^{jl}_{ik}(\alpha_{ik}(t), \Theta(t))$ is Lipchitz in $\alpha_{ik}(t)$.
One direct consequence is that the best response $\alpha^*_{ik}(\Delta_l \boldsymbol{u}_{ik}, \boldsymbol{\gamma}(t))$ is Lipchitz in $m_{ik}(t),$ $\forall (i,k) \in \mathcal{C}$ since both $\Theta(t)$ and $\eta(t)$ are linear functions of $m_{ik}(t)$ $\forall (i,k) \in \mathcal{C}$.
However, the proof relies on the assumption that that cost $v_{ik}(l,\alpha^l_{ik}(t),\boldsymbol{\gamma}(t))$ is strongly convex w.r.t $\alpha^l_{ik}(t)$. 

By computing the second order partial derivative of $v_{ik}$ with respect to $\alpha_{ik}(t)$,
\vspace{-0.5 cm}

\small
\begin{eqnarray}
&&\hspace{-1 cm}\frac{\partial^2v_{ik}}{\partial \alpha_{ik}(t)^2}(S,\alpha_{ik}(t),\Theta(t))=(L_{ik}(\Theta(t))(k\eta(t)+1)+(k\eta(t)-k\Theta(t)-\lambda_{ik}(t)-(1-\lambda_{ik})k\eta(t))\nonumber\\
&&\hspace{-0.9 cm}-L_{ik}(\Theta(t))(\beta^L_{ik}(t)(k\eta(t)+1-\kappa\delta_{ik}(t))-(\beta^E_{ik}(k\eta(t)-k\Theta(t)-\lambda_{ik}(t)-(1-\lambda_{ik})k\eta(t)\nonumber\\
&&\hspace{-1 cm}-\kappa\delta_{ik}(t)(1-L_{ik}(\Theta(t)))).\nonumber
\end{eqnarray}
\normalsize

The derivative $\frac{\partial^2v_{ik}}{\partial \alpha_{ik}(t)^2}$ is not necessarily positive since the second term (which corresponds to the QoI when the node receives misinformation) could be negative.  Thus, in order to ensure that the cost is strongly convex, the values of $Q_{ik}(\boldsymbol{\gamma}(t))$ are scaled such that the resulting values are always positive.
 Let $Q'_{ik}(\boldsymbol{\gamma}(t))$ the scaled valued. $Q'_{ik}(\boldsymbol{\gamma}(t))$ can be possibly defined as $Q'_{ik}(\boldsymbol{\gamma}(t))=Q_{ik}(\boldsymbol{\gamma}(t))+S_k$ where $S_k$ is the scaling factor and is given by $S_k=k+2$.

By normalizing the QoI values, the second order partial derivative of $v_{ik}$ with respect to $\alpha_{ik}(t)$ becomes
\vspace{-0.5 cm}

\small
\begin{eqnarray}
&&\hspace{-1 cm}\frac{\partial^2v_{ik}}{\partial \alpha_{ik}(t)^2}(S,\alpha_{ik}(t),\Theta(t))=(L_{ik}(\Theta(t))(k\eta(t)+1+S_k)+(k\eta(t)-k\Theta(t)-\lambda_{ik}(t)-(1-\lambda_{ik})k\eta(t)+S_k)\nonumber\\
&&\hspace{-0.5 cm}-L_{ik}(\Theta(t))(\beta^L_{ik}(t)(k\eta(t)+1+S_k-\kappa\delta_{ik}(t))-(\beta^E_{ik}(k\eta(t)-k\Theta(t)+S_k-\lambda_{ik}(t)-(1-\lambda_{ik})k\eta(t)\nonumber\\
&&\hspace{-0.5 cm}-\kappa\delta_{ik}(t)(1-L_{ik}(\Theta(t)))).\nonumber
\end{eqnarray}
\normalsize
It can be easily shown that $\frac{\partial^2v_{ik}}{\partial \alpha_{ik}(t)^2}$ is lower bounded by $\kappa=(1-\beta_{ik}^L)(2k-1)+\beta_E\delta_{ik}>0$ assuming $\beta_L \geq \beta_E$. Hence, the cost in this case is strongly convex with parameter $\kappa$ assuming that either $\delta_{ik} \neq 0$ or $\beta^L_{ik} \neq 1$.


\item In order to prove this property, we first note that the transitional rate is only a function of $\alpha_{ik}(t)$ and $\Theta(t)$ only when the state is  $\emph{S}$.
We consider the transitional rate $G^{SI}_{ik}(\alpha^{S}_{ik}(t), \Theta(t)))=\alpha^{S}_{ik}(t)R_{ik}(\Theta(t))$ and compute its partial derivative with respect to $\Theta(t)$:
\small
\begin{equation}
\frac{\partial G^{SI}_{ik}}{\partial \Theta(t)}(\alpha^*_{ik}(t),\Theta(t))= \frac{\partial}{\partial \Theta(t)}\alpha^*_{ik}(t)\Theta(t)=k\Theta(t)\frac{\partial}{\partial \Theta(t)} \alpha^*_{ik}(t)+k\alpha^*_{ik}(t).
\end{equation}
\normalsize
The partial derivative $\frac{\partial}{\partial \Theta(t)}\alpha^*_{ik}(t)$ is bounded  since $\alpha_{ik}(t)$ is Lipschitz in  $\Theta(t)$  according to 2). Further, $\Theta(t)$ and $\alpha_{ik}(t)$ are bounded by $1$. Thus, $\frac{\partial G^{S_I}_{ik}}{\partial \Theta(t)}(\alpha^*_{ik}(t))$ is bounded, and the transitional rate $ G^{SI}_{ik}$ is Lipschitz in $\Theta(t)$.

Further, $G^{SI}_{ik} (\alpha_{ik}(t), \Theta(t))$ is a linear function of $\alpha_{ik}(t)$ and therefore is Lipschitz in $\Delta_S u$ since $\alpha_{ik}(t)$ is Lipschitz in $\Delta_S u$. This property can be proved for the remaining transitional probabilities using a similar method.
\item The immediate cost $v_{ik}$ and its derivative $\nabla_\alpha v_{ik}$ can be similarly proven to be Lipchitz in $\Theta$ and $\eta$ by showing that the partial derivatives with respect to $\Theta$ and $\eta$ are bounded. 
\item This property easily follows from 2), 3), and 4).
\end{enumerate}
\vspace{-0.3 cm}
\section*{Appendix B: Proof of Lemma 2}

Let $W^N_{ik}(l,t)=\mathbb{E}\Big[(\boldsymbol{u}^l_{ik}(t)-\boldsymbol{u}^{N,\boldsymbol{n},l}_{ik}(t))^2\Big]$. Thus, $W^N_{ik}(t)=\max_{l \in \mathcal{S}}W^N_{ik}(l,t)$.
To prove the lemma,  we apply Dynkin formula on functions of the process $(l,\boldsymbol{n}_{ik})$ . First, we define the infinitesimal generator acting on a function of the process $(l,\boldsymbol{n}_{ik})$ $\varphi: (\mathcal{S}, \mathcal{N}^{\mathcal{S}}, [0,T]) \rightarrow \mathbb{R}$ as
\small
\begin{eqnarray}
&&\hspace{-0.5 cm}A_{ik} \varphi(l, \boldsymbol{n}_{ik}, s)=\sum_{j \in \mathcal{S}} G^{ik}_{lj}(\alpha^{N,l}_{ik}(s))[\varphi(j, \boldsymbol{n}_{ik}(s), s)-\varphi(l, \boldsymbol{n}_{ik}(s), s)]\nonumber\\
&&\hspace{3.7 cm}+\sum_{j \in \mathcal{S}}\sum_{y \in \mathcal{S}}n^y_{ik} G^{N,ik}_{yj}(\alpha^{N,y}_{ik}(s))[\varphi(l, \boldsymbol{n}_{ik}(s)+\boldsymbol{e}_{jy}, s)-\varphi(l, \boldsymbol{n}_{ik}(s), s)], \label{genconv}
\end{eqnarray}
\normalsize
where $\alpha^{N,y}_{ik}=\alpha^{N,y*}_{ik}(\boldsymbol{\gamma}_N(\boldsymbol{n}(t)+\boldsymbol{e}^{ik}_{ly}),\Delta_y u^{N,\boldsymbol{n}(t)+\boldsymbol{e}^{ik}_{ly}}_{ik})$ for $(i,k) \neq (i',k')$ ($(i',k')$ is the class of the reference player) and $\alpha^{N,y}_{ik}=\alpha^{N,y*}_{ik}(\boldsymbol{\gamma}_N(\boldsymbol{n}(t)-\boldsymbol{e}^{ik}_{y}),\Delta_y u^{N,\boldsymbol{n}(t)-\boldsymbol{e}^{ik}_{y}}_{ik})$ for $(i,k)\neq (i',k')$ are the equilibrium acceptance probabilities for the finite IoBT game. Using Dynkin formula, we have
\small
\begin{equation}
\mathbb{E}[\varphi(l_{ik}(T),\boldsymbol{n}_{ik}(T),T)-\varphi(l_{ik}(t),\boldsymbol{n}_{ik}(t),t)]
= \mathbb{E} \bigg[\int_{t}^T \frac{d \varphi}{dt}(l_{ik}(s),\boldsymbol{n}_{ik}(s),s) + A_{ik} \varphi(l_{ik}(s), \boldsymbol{n}_{ik}(s), s) ds\bigg],\label{dynkin}
\end{equation}
\normalsize
where $l_{ik}(s)$ is the state of the reference player at time $s$.

Next, we define $\varphi_l(j, \boldsymbol{n}_{ik}(t), t)=(\boldsymbol{u}^l_{ik}(t)-\boldsymbol{u}^{N,\boldsymbol{n},l}_{ik}(t))^2$. Using (\ref{dynkin}), we have
\small
\begin{eqnarray}
&&\hspace{-0.6 cm}W^N_{ik}(l,t)-W^N_{ik}(l,T)=-\mathbb{E}\Big[(\boldsymbol{u}^{N,\boldsymbol{n},l}_{ik}(t)-\boldsymbol{u}^l_{ik}(t))^2\Big]+\mathbb{E}\Big[(\boldsymbol{u}^{N,\boldsymbol{n},l}_{ik}(T)-\boldsymbol{u}^l_{ik}(T))^2\Big]\nonumber
\end{eqnarray}
\begin{eqnarray}
&&\hspace{-0.6 cm}=\mathbb{E}\int_{t}^T 2(\boldsymbol{u}^{N,\boldsymbol{n},l}_{ik}(s)-\boldsymbol{u}^l_{ik}(s))\frac{d}{ds}(\boldsymbol{u}^{N,\boldsymbol{n},l}_{ik}(s)-\boldsymbol{u}^l_{ik}(s))ds + \int_{t}^T \sum_{jy}n^y_{ik} G^{N,ik}_{yj}(\alpha^{N,y}_{ik}(s))[\varphi(l, \boldsymbol{n}_{ik}(s)+\boldsymbol{e}^{ik}_{jy}, s)-\varphi(l, \boldsymbol{n}_{ik}(s), s)] \nonumber
\end{eqnarray}
\begin{eqnarray}
&&\hspace{-0.6 cm}=\mathbb{E}\int_{t}^T 2(\boldsymbol{u}^{N,\boldsymbol{n},l}_{ik}(s)-\boldsymbol{u}^l_{ik}(s))\Big(\sum_{y,j}\eta^l_{ik}(y,j,\boldsymbol{n})(u^{N,\boldsymbol{n}+\boldsymbol{e}^{ik}_{jy},l}_{ik}(s) - u^{N,\boldsymbol{n},l}_{ik}(s)) - h(\Delta_l u^{N,\boldsymbol{n}}_{ik} ,\boldsymbol{m}^N(s), l)+h(\Delta_l \boldsymbol{u}_{ik}, \boldsymbol{m}(s), l)ds, \nonumber\\
&&\hspace{-0.6 cm}+\mathbb{E} \int_{t}^T \sum_{jy}n^y_{ik} G^{N,ik}_{yj}(\alpha^{N,y}_{ik}(s))(u^{N,\boldsymbol{n}+\boldsymbol{e}^{ik}_{jy},l}_{ik}(s) - u^{N,\boldsymbol{n},l}_{ik}(s))^2 -(u^{N,\boldsymbol{n},l}_{ik}(s) - u^{N,\boldsymbol{n},l}_{ik}(s))^2), \nonumber
\end{eqnarray}
\begin{eqnarray}
&&\hspace{-0.6 cm}= \mathbb{E} \int_{t}^T \sum_{jy}n^y_{ik} G^{N,ik}_{yj}(\alpha^{N,y}_{ik}(s))(u^{N,\boldsymbol{n}+\boldsymbol{e}^{ik}_{jy},l}_{ik}(s) - u^{N,\boldsymbol{n},l}_{ik}(s))^2 ds 
+ \mathbb{E}\int_{t}^T (2(u^{N,\boldsymbol{n},l}_{ik}(s) - u^{N,\boldsymbol{n},l}_{ik}(s)) (h(\Delta_l \boldsymbol{u}_{ik}, \boldsymbol{m}(s), l),\nonumber\\
&&\hspace{-0.3 cm}- h(\Delta_l u^{N,\boldsymbol{n}}_{ik} ,\boldsymbol{m}^N(s),l)ds.
\end{eqnarray}
\normalsize

From Remark \ref{grad}, we have  $ \sum_{jy}n^y_{ik} G^{N,ik}_{yj}(\alpha^{N,y}_{ik}(s))(u^{N,\boldsymbol{n}+\boldsymbol{e}_{jy},l}_{ik}(s) - u^{N,\boldsymbol{n},l}_{ik}(s))^2 \leq \frac{K_2}{N}$. Then, since the terminal conditions are zero, we have
\small
\begin{equation}
W^N_{ik}(t)\leq \frac{K_3}{N}+2\mathbb{E}\int_{t}^T(u^{N,\boldsymbol{n},l}_{ik}(s) - u^{N,\boldsymbol{n},l}_{ik}(s)) (h(\Delta_l \boldsymbol{u}_{ik}, \boldsymbol{m}(s), l)- h(\Delta_l u^{N,\boldsymbol{n}}_{ik} ,\boldsymbol{m}^N(s))ds, \label{fiftythree}
\end{equation}
\normalsize
where $K_3=K_2T$. Using Proposition 1,  $h$ is Lipschitz function of $\Delta_l \boldsymbol{u}_{ik}$ and $\boldsymbol{m}_{ik}(t)$ $\forall (i,k) \in \mathcal{C}$. Hence,
\small
\vspace{-0.3 cm}
\begin{equation}
(h(\Delta_l \boldsymbol{u}_{ik}, \boldsymbol{m}(s), l)- h(\Delta_l u^{N,\boldsymbol{n}}_{ik}, \boldsymbol{m}^N(s),l)) \leq K_4( \sum_{(r,v) \in \mathcal{C}}||\frac{\boldsymbol{n_{rv}}(s)}{N_{rv}}-\boldsymbol{m_{rv}}(s)||+||u^{N,\boldsymbol{n}}_{ik}-u_{ik}||).\label{fiftytwo}
\end{equation}
\normalsize
Then, from (\ref{fiftythree}) and (\ref{fiftytwo}) and using the property $ab<a^2+b^2$, we have
\small
\begin{eqnarray}
W^N_{ik}(t)&\leq& \frac{K_3}{N}+K_4 \mathbb{E} \int_{t}^T\sum_{(r,v) \in \mathcal{C}}||\frac{\boldsymbol{n_{rv}}(s)}{N_{rv}}-\boldsymbol{m_{rv}}(s)||^2+||u^{N,\boldsymbol{n}}_{ik}(s)-u^{ik}(s)||^2ds, \nonumber
\end{eqnarray}
\begin{eqnarray}
W^N_{ik}(t)&\leq& \frac{K_3}{N}+K_4 \mathbb{E} \int_{t}^T W^N_{ik}(s) + \sum_{(r,v) \in \mathcal{C}}V^N_{rv}(s)ds,\nonumber
\end{eqnarray}
\begin{eqnarray}
&\leq& \frac{K_3}{N}+K_4 \mathbb{E} \int_{t}^T W^N_{ik}(s) + \sum_{(r,v) \in \mathcal{C}}V^N_{rv}(s) ds,\nonumber
\end{eqnarray}
\begin{eqnarray}
&\leq& \frac{C_1}{N}+C_1 \mathbb{E} \int_{t}^T W^N_{ik}(s) + \sum_{(r,v) \in \mathcal{C}}V^N_{rv}(s) ds,\nonumber
\end{eqnarray}
\normalsize
where $C_1=\max\{K_3,K_4\}$.
\normalsize
\vspace{-1 cm}
\section*{Appendix C: Proof of Lemma 3}

By applying Dynkin's Formula  (\ref{dynkin}) with $\varphi_l(j,\boldsymbol{n}_{ik},t)=(\boldsymbol{m}^l_{ik}(t)-\frac{\boldsymbol{n}^l_{ik}(t)}{N_{ik}})^2$ for all $l \in \mathcal{S}$, we get
\small
\begin{eqnarray}
V^N_{ik}(l,t)-\frac{(\nu^l_{ik})(1-\nu^l_{ik})}{2}=\mathbb{E}\int_{0}^t \frac{d \varphi_l}{dt}(l_{ik}(s),\boldsymbol{n}_{ik}(s),s) + A_{ik} \varphi^l(l_{ik}(s), \boldsymbol{n}_{ik}(s), s) ds,
\end{eqnarray}
\normalsize
where 
\small
\vspace{-0.3 cm}
\begin{equation}
\frac{d \varphi_l}{dt}(l_{ik}(s),\boldsymbol{n}_{ik}(s),s)=-2\Big(\frac{\boldsymbol{n}^l_{ik}(s)}{N_{ik}}-\boldsymbol{m}^l_{ik}(s)\Big)\sum_{j \in \mathcal{S}}G^{lj}_{ik}(\alpha^l_{ik}(s))m^j_{ik}(s).
\end{equation}
\normalsize
In what follows, we replace $\varphi^l(l_{ik}(s), \boldsymbol{n}_{ik}(s), s)$ by $\varphi^l( \boldsymbol{n}_{ik}(s), s)$ since $\varphi_l$ is independent on $l_{ik}(s)$. Therefore,
\small
\begin{eqnarray}
&&\hspace{-0.5 cm}A_{ik}\varphi(l_{ik}(s), \boldsymbol{n}_{ik}(s), s) =
\sum_{j \in \mathcal{S}}n_{ik}^j G^{ik}_{jl}(\alpha^{N,j}_{ik}(t))(\varphi_l(\boldsymbol{n}_{ik}(s)+\boldsymbol{e}_{lj},h)-\varphi_l(\boldsymbol{n}_{ik}(s),s)) \nonumber\\
&&\hspace{3 cm}+ \sum_{j\neq l}n_{ik}^l G^{ik}_{lj}(\alpha^{N,l}_{ik}(t))(\varphi_l(\boldsymbol{n}_{ik}(s)+\boldsymbol{e}_{jl},s)-\varphi_l(\boldsymbol{n}_{ik}(s),s)),\nonumber\\
&&\hspace{-0.5 cm}=\Big(2(\frac{n^l_{ik}(s)}{N_{ik}}-m^l_{ik}(s))+\frac{1}{N_{ik}}\Big)\sum_{j \neq l}\frac{n^j_{ik}(s)}{N_{ik}}G^{N,ik}_{jl}(\alpha^{N,j}_{ik}(s))-(2(\frac{n^l_{ik}(s)}{N_{ik}}-m^l_{ik}(s))-\frac{1}{N_{ik}})\sum_{j \neq l}\frac{n^l_{ik}(s)}{N_{ik}}G^{N,ik}_{lj}(\alpha^{N,l}_{ik}(s)).\nonumber
\end{eqnarray}
\normalsize
Now, using the property that $\sum_{j \neq l}G^{N,ik}_{lj}(\alpha^{N,l}_{ik}(s))=-G^{N,ik}_{ll}(\alpha^{N,l}_{ik}(s))$, we have
\small
\begin{eqnarray}
\hspace{-0.6 cm}A_{ik}\varphi(l_{ik}(s), \boldsymbol{n}_{ik}(s), s) &=&\Big(2(\frac{n^l_{ik}(s)}{N}-m^l_{ik}(s))+\frac{1}{N}\Big)\sum_{j \neq l}\frac{n^j_{ik}(s)}{N}G^{N,ik}_{lj}(\alpha^{N,l}_{ik}(s))\nonumber
\end{eqnarray}
\begin{eqnarray}
&&+(2(\frac{n^l_{ik}(s)}{N_{ik}}-m^l_{ik}(s))-\frac{1}{N_{ik}})\frac{n^l_{ik}(s)}{N}G^{N,ik}_{ll}(\alpha^{N,l}_{ik}(s)),\nonumber
\end{eqnarray}
\begin{eqnarray}
&&\leq \Big(2(\frac{n^l_{ik}(s)}{N_{ik}}-m^l_{ik}(s)\Big)\sum_{j}\frac{n^j_{ik}(s)}{N}G^{N,ik}_{lj}(\alpha^{N,l}_{ik}(s))+\frac{K_5}{N_{ik}},
\end{eqnarray}
\normalsize
where the last equality follows since each transition rate is bounded. Thus,
\small
\begin{eqnarray}
&&V^N_{ik}(l,t)\leq \mathbb{E} \int_{0}^t 2(\frac{n^l_{ik}(s)}{N_{ik}}-m^l_{ik}(s))\sum_{j}\Big(\frac{n^j_{ik}(s)}{N}G^{ik}_{jl}(\alpha^{j}_{ik}(s))-m^j_{ik}(s)G^{ik}_{jl}(\alpha^{N,j}_{ik}(s))\Big) +\frac{K_5}{N_{ik}},\nonumber
\end{eqnarray}
\begin{eqnarray}
&&=\mathbb{E} \int_{0}^t2(\frac{n^l_{ik}(s)}{N_{ik}}-m^l_{ik}(s))\sum_{j}\frac{n^j_{ik}(s)}{N_{ik}}(G^{ik}_{jl}(\alpha^{N,j}_{ik}(s))-G^{ik}_{jl}(\alpha^j_{ik}(s))\nonumber
\end{eqnarray}
\begin{eqnarray}
&&+G^{ik}_{jl}(\alpha^{j}_{ik}(s))((\frac{n^j_{ik}(s)}{N_{ik}}-m^j_{ik}(s))ds+\frac{K_6}{N_{ik}},\label{vlk}
\end{eqnarray}
\normalsize
where $K_6=K_5 \cdot T$.
Since in our game, the transitional rate is Lipchitz in $m_{ik}(t)$ $\forall$ $(i,k)$ and in $\boldsymbol{u}_{ik}$ (according to Proposition 1), and using Remark \ref{grad} we have for $(i,k)=(i',k')$ ($(i',k')$ is the class of the reference player)
\small
\begin{eqnarray}
&&G^{ik}_{jl}(\alpha^{N,j}_{ik}(s))-G^{ik}_{jl}(\alpha^j_{ik}(s))\nonumber
\end{eqnarray}
\begin{eqnarray}
&&\leq K_7 (\sum_{(r,v) \in \mathcal{C}}||\boldsymbol{m_{rv}}(s)-\frac{\boldsymbol{n_{rv}}(s)+\boldsymbol{e}_{jl}}{N_{rv}}||)+(||\boldsymbol{u}^{N,\boldsymbol{n}+e^{ik}_{jl}}_{ik}(s)-\boldsymbol{u}_{ik}||),\nonumber
\end{eqnarray}
\begin{eqnarray}
&&\leq K_7(\sum_{(r,v) \in \mathcal{C}}||\boldsymbol{m}_{rv}(s)-\frac{\boldsymbol{n_{rv}}(s)}{N_{rv}}||)+\frac{2}{N_{rv}}+(||\boldsymbol{u}^{N,\boldsymbol{n}+e^{ik}_{jl}}_{ik}(s)-\boldsymbol{u}^{N,\boldsymbol{n}}_{ik}(s)||+||\boldsymbol{u}^{N,\boldsymbol{n}}_{ik}(s)-\boldsymbol{u}_{ik}(s)||),\nonumber\\
&&\leq K_7(\sum_{(r,v) \in \mathcal{C}}||\boldsymbol{m}_{rv}(s)-\frac{\boldsymbol{n_{rv}}(s)}{N_{rv}}||)+\frac{2+2K_8}{N_{rv}}+||\boldsymbol{u}^{N,\boldsymbol{n}}_{ik}(s)-\boldsymbol{u}_{ik}(s)||. \label{alphalip}
\end{eqnarray}
\normalsize
Also, for $(i,k) \neq (i',k')$, we have
\small
\begin{eqnarray}
&&G^{ik}_{jl}(\alpha^{N,j}_{ik}(s))-G^{ik}_{jl}(\alpha^j_{ik}(s))\nonumber
\end{eqnarray}
\begin{eqnarray}
&&\leq K_7 (\sum_{(r,v) \in \mathcal{C}}||\boldsymbol{m_{rv}}(s)-\frac{\boldsymbol{n_{rv}}(s)-\boldsymbol{e}_{l}}{N_{rv}}||)+(||\boldsymbol{u}^{N,\boldsymbol{n}+e_{jl}}_{ik}(s)-\boldsymbol{u}_{ik}||),\nonumber
\end{eqnarray}
\begin{eqnarray}
&&\leq K_7(\sum_{(r,v) \in \mathcal{C}}||\boldsymbol{m}_{rv}(s)-\frac{\boldsymbol{n_{rv}}}{N_{rv}}(s)||)+\frac{1}{N_{rv}}+(||\boldsymbol{u}^{N,\boldsymbol{n}-e^{ik}_{l}}_{ik}(s)-\boldsymbol{u}^{N,\boldsymbol{n}}_{ik}(s)||+||\boldsymbol{u}^{N,\boldsymbol{n}}_{ik}(s)-\boldsymbol{u}_{ik}(s)||),\nonumber
\end{eqnarray}
\begin{eqnarray}
&&\leq K_7(\sum_{(r,v) \in \mathcal{C}}|||\boldsymbol{m}_{rv}(s)-\frac{\boldsymbol{n_{rv}}(s)}{N_{rv}}||)+\frac{2+2K_8}{N_{rv}}+||\boldsymbol{u}^{N,\boldsymbol{n}}_{ik}(s)-\boldsymbol{u}_{ik}(s)||. \label{alphalip}
\end{eqnarray}
\normalsize
By substituting (\ref{alphalip}) into (\ref{vlk}), we get
\small
\begin{eqnarray}
&&\hspace{-0.5 cm}V^N_{ik}(l,t)\leq 2K_7 \mathbb{E} \int_{0}^t|\frac{n^l_{ik}(s)}{N_{ik}}-m^l_{ik}(s)|\Big((\sum_{(r,v) \in \mathcal{C}}||\boldsymbol{m}_{rv}(s)-\frac{\boldsymbol{n_{rv}}(s)}{N_{rv}}||)+\frac{K_9}{N_{rv}}+||\boldsymbol{u}^{N,\boldsymbol{n}}_{ik}(s)-\boldsymbol{u}_{ik}(s)||\Big)ds\nonumber
\end{eqnarray}
\begin{eqnarray}
&&+\mathbb{E} \int_{0}^t 2 \Big(\frac{n^l_{ik}(s)}{N_{ik}}-m^l_{ik}(s)\Big)\sum_{j}G^{ik}_{jl}(\alpha_{ik}(t))\Big(\frac{n^j_{ik}(s)}{N_{ik}}-m^j_{ik}(s)\Big)ds +\frac{K_6}{N},\nonumber
\end{eqnarray}
\begin{eqnarray}
&&\leq   2K_7 \mathbb{E} \int_{0}^t|\frac{n^l_{ik}(s)}{N_{ik}}-m^l_{ik}(s)|\Big((\sum_{(r,v) \in \mathcal{C}}||\boldsymbol{m}_{rv}(s)-\frac{\boldsymbol{n_{rv}}(s)}{N_{rv}}||) +\frac{K_{10}}{N_{rv}}+||\boldsymbol{u}^{N,\boldsymbol{n}}_{ik}(s)-\boldsymbol{u}_{ik}(s)||\Big)ds\nonumber,
\end{eqnarray}
\begin{eqnarray}
&&+K_9\mathbb{E} \int_{0}^t 2 \Big|\frac{n^l_{ik}(s)}{N_{ik}}-m^l_{ik}(s)\Big|\sum_{j}\Big|\frac{n^j_{ik}(s)}{N_{ik}}-m^j_{ik}(s)\Big|ds,  \label{sixtythree}
\end{eqnarray}
\vspace{-0.3 cm}
\normalsize
where $K_{10}=K_6+2T(1+K_7)+K_9$.
Let $(y,z)=\text{argmax}_{(r,v)}||\boldsymbol{m}_{rv}(s)-\frac{\boldsymbol{n}_{rv}(s)}{N_{rv}}||$. Thus, 
\vspace{0.2 cm}
\small
\begin{eqnarray}
|\frac{n^l_{ik}(s)}{N_{ik}}-m^l_{ik}(s)|\Big((\sum_{(r,v) \in \mathcal{C}}||\boldsymbol{m}_{rv}(s)-\frac{\boldsymbol{n_{rv}}(s)}{N_{rv}}||)\leq 
\sum_{(r,v) \in \mathcal{C}} ||\boldsymbol{m}_{yz}(s)-\frac{\boldsymbol{n}_{yz}(s)}{N_{yz}}||^2, \nonumber\\
\leq |\mathcal{C}|||\boldsymbol{m}_{yz}(s)-\frac{\boldsymbol{n}_{yz}(s)}{N_{rv}}||^2,
 \label{sixtyfive}
\end{eqnarray}
\normalsize
\vspace{-1.3 cm}
\small
\begin{eqnarray}
|\frac{n^l_{ik}(s)}{N_{ik}}-m^l_{ik}(s)|.||\boldsymbol{u}^{N,\boldsymbol{n}}_{ik}(s)-\boldsymbol{u}_{ik}(s)|| \leq ||\frac{\boldsymbol{n}_{ik}(s)}{N_{ik}}-\boldsymbol{m}_{ik}(s)||.||\boldsymbol{u}^{N,\boldsymbol{n}}_{ik}(s)-\boldsymbol{u}_{ik}(s)||, \nonumber\\
\leq ||\frac{\boldsymbol{n}_{ik}(s)}{N_{ik}}-\boldsymbol{m}_{ik}(s)||^2+||\boldsymbol{u}^{N,\boldsymbol{n}}_{ik}(s)-\boldsymbol{u}_{ik}(s)||^2. \label{sixtysix}
\end{eqnarray}
\normalsize
From (\ref{sixtythree}), (\ref{sixtyfive}), and (\ref{sixtysix}), we have
\small
\begin{eqnarray}
&&V^N_{ik}(l,t) \leq K_{11} \mathbb{E} \int_{0}^t ||\frac{\boldsymbol{n}_{ik}(s)}{N_{ik}}-\boldsymbol{m}_{ik}(s)||^2+||\boldsymbol{u}^{N,\boldsymbol{n}}_{ik}(s)-\boldsymbol{u}_{ik}(s)||^2 + ||\frac{\boldsymbol{n}_{yz}(s)}{N_{yz}}-\boldsymbol{m}_{yz}(s)||^2+\frac{K_{10}}{N_{\max}},\nonumber\\
\end{eqnarray}
\normalsize
where $K_{11}=2K_7+K_9$ and $N_{\max}=\max_{(r,v)}N_{rv}$. Thus,
\small
\begin{equation}
V^N_{ik}(t) \leq C_2 \mathbb{E} \int_{0}^t (V^N_{ik}(s)+W^N_{ik}(s)+V^N_{yz}(s)) ds +\frac{C_2}{N_{\max}},
\end{equation}
\normalsize
with $C_2=\max\{K_{11},K_{10}\}$.

%

\normalsize

\vspace{-0.2 cm}
\def\baselinestretch{0.95}


\begin{thebibliography}{1.1}
\bibitem{confc} N. Abuzainab and W. Saad, ``Misinformation Control in the Internet of Battlefield Things: A Multiclass Mean-Field Game'', submitted to IEEE GLOBECOM, 2018.
	\bibitem{IoBT}N. Suri, M. Tortonesi, J. Michaelis, P. Budulas, G. Benincasa, S. Russell,
	C. Stefanelli, and R. Winkler, “Analyzing the applicability of internet of
	things to the battlefield environment,” in Proc. of \emph{ International Conference on Military Communications and Information Systems (ICMCIS)}, Brussels, Belgium, May 2016, pp.~ 1-8.
\bibitem{IoBTtwo}P. P. Ray, ``Towards an Internet of Things based architectural framework for defence," in Proc. of  \emph{International Conference on Control, Instrumentation, Communication and Computational Technologies (ICCICCT), } Dec. 2015, Kumaracoil, India, pp. 411-416.
\bibitem{IoBTthree} M. Mozaffari, W. Saad, M. Bennis, and M. Debbah, ``Unmanned Aerial Vehicle with Underlaid Device-to-Device Communications: Performance and Tradeoffs," \emph{IEEE Transactions on Wireless Communications}, vol.\ 15, no.\ 6, pp.~3949-3963, Jun. 2016.
\bibitem{IoBTone}M. Tortonesi,  A. Morelli , M. Govoni, J. Michaelis, N. Suri, C. Stefanelli, and S. Russell ``Leveraging Internet of Things within the military network environment — Challenges and solutions,'' in Proc. of \emph{IEEE World Forum on Internet of Things (WF-IoT)}, Reston, VA, Dec. 2016, pp.~ 111-116.

\bibitem{social} F. Jin,  W. Wang,  L. Zhao,  E. Dougherty,  Y. Cao,  C. Lu,  and N. Ramakrishnan, ``Misinformation propagation in the age of twitter,'' in \emph{Computer}, vol.\ 47, no.\ 12, pp~90-94, Dec. 2014.
\bibitem{mobility}X. Wang, Y. Lin, Y. Zhao, L. Zhang, J. Liang, and Z. Cai, ``A novel approach for inhibiting misinformation propagation in human mobile opportunistic networks,'' in \emph{Peer-to-Peer Networking and Applications,} vol.\ 10, no.\ 2, pp.~ 377-394, Mar. 2017.
\bibitem{epidemicgame2}J. Omic, A. Orda and P. Van Mieghem, ``Protecting against network infections: a game theoretic perspective,'' in \emph{Proc. IEEE IEEE International Conference on Computer Communications (INFOCOM)}, Rio de Janeiro, Brazil, Apr. 2009, pp. 1485-1493.
\bibitem{epidemicgame3}Y. Hayel, S. Trajanovski, E. Altman, H. Wang ,and P. Van Mieghem, ``Complete game-theoretic characterization of SIS epidemics protection strategies," in Proc. \emph{IEEE Conference on Decision and Control}, Los Angeles, CA, Dec. 2014, pp. 1179-1184.
\bibitem{epidemicgame7}S. Trajanovski, Y. Hayel, E. Altman, H. Wang, and P. Van Mieghem, ``Decentralized Protection Strategies Against SIS Epidemics in Networks,'' in \emph{IEEE Transactions on Control of Network Systems}, vol.\ 2, no. 4, pp. 406-419, Dec. 2015.
\bibitem{epidemicgame1}Z. Xu, A. Khanafer, and T. Ba\c{s}ar, ``Competition over epidemic networks: Nash and Stackelberg games,'' in \emph{Proc. American Control Conference (ACC)}, Chicago, IL, Jul. 2015, pp. 2063-2068.
\bibitem{epidemicgame0}S. Shen, H. Li, R. Han, A. V. Vasilakos, Y. Wang, and Q. Cao, ``Differential Game-Based Strategies for Preventing Malware Propagation in Wireless Sensor Networks," in \emph{IEEE Transactions on Information Forensics and Security}, vol.\ 9, no.\ 11, pp.~ 1962-1973, Nov. 2014.
\bibitem{epidemicgame4} M. H. R. Khouzani, S. Sarkar and E. Altman, ``Saddle-point strategies in malware attack,'' in \emph{IEEE Journal on Selected Areas in Communications}, vol.\ 30, no.\ 1, pp.~ 31-43, Jan. 2012.
\bibitem{epidemicgame5}S. Trajanovski, F. A. Kuipers, Y. Hayel, E. Altman, and P. Van Mieghem, ``Designing virus-resistant networks: a game-formation approach,'' in \emph{Proc. IEEE Conference on Decision and Control (CDC)}, Osaka, Japan, Dec. 2015, pp. 294-299.
\bibitem{epidemicgame6}X. Li, C. Li, and X. Li, ``Vaccinating SIS epidemics in networks with zero-determinant strategy,'' in \emph{Proc. IEEE International Symposium on Circuits and Systems (ISCAS)}, Baltimore, MD, USA, May 2017, pp. 1-4.
\bibitem{epidemicgame9}X. J. Li, C. Li and X. Li, ``Minimizing Social Cost of Vaccinating Network SIS Epidemics,'' in \emph{IEEE Transactions on Network Science and Engineering}, vol. PP, no. 99, pp. 1-1, Oct. 2017.
\bibitem{mean-fieldSIR}J. Doncel, N. Gast, and B. Gaujal,``A mean-field game analysis of SIR dynamics with vaccination,'' https://hal.inria.fr/hal-01496885.
\bibitem{MIoT1} N. Abuzainab and W. Saad, ``Dynamic connectivity game for adversarial Internet of Battlefield Things systems,'' in \emph{IEEE Internet of Things Journal} to appear.
\bibitem{MIoT2} J. Furtak, Z. Zieli\'{n}ski and J. Chudzikiewicz, ``Security techniques for the WSN link layer within military IoT,'' in \emph{Proc. IEEE World Forum on Internet of Things (WF-IoT)}, Reston, VA, Dec. 2016, pp.~233-238.
\bibitem{MIoT3}K. Wrona, A. de Castro, and B. Vasilache, ``Data-centric security in military applications of commercial IoT technology,'' in \emph{Proc. IEEE World Forum on Internet of Things (WF-IoT)}, Reston, VA, USA, Dec. 2016, pp.~239-244.
\bibitem{MIoT4}K. Wrona, ``Securing the Internet of Things a military perspective,'' in \emph{Proc. IEEE World Forum on Internet of Things (WF-IoT)}, Milan, Italy, Dec. 2015, pp.~502-507.

\bibitem{finitemean-field}D. A. Gomes, J. Mohr, and R. Rig\~ao Souza, ``Continuous time finite state mean-field games,'' in \emph{Applied Mathematics and Optimization} vol.\ 68, no.\ 1, pp.~ 99-143 Aug. 2013.
 
\bibitem{multiclass}A. Bensoussan, J. Frehse, and P. Yam, \emph{Mean-field games and mean-field type control theory}, Springer, 2013.

\bibitem{meanw}S. Samarakoon, M. Bennis, W. Saad, M. Debbah, and M. Latva-aho, "Ultra dense small cell networks: turning density into energy efficiency," \emph{IEEE Journal on Selected Areas in Communications (JSAC)},  vol. 34, no. 5, pp. 1267 - 1280, May 2016.

\bibitem{gametheory}G. Bacci, S. Lasaulce, W. Saad, and L. Sanguinetti, "Game theory for networks: A tutorial on game-theoretic tools for emerging signal processing applications," \emph{IEEE Signal Processing Magazine}, vol. 33, no. 1, pp. 94-119, Jan. 2016.

\bibitem{gametheory1} Z. Han, D. Niyato, W. Saad, T. Ba\c{s}ar, and A. Hj{{\o{}}}rungnes, \emph{Game Theory in Wireless and Communication Networks: Theory, Models, and Applications}, Cambridge University Press, 2012.

\bibitem{finitemarkov} 
B. Miller ``Finite state continuous time Markov decision processes with a finite planning horizon", in  \emph{Journal of Mathematical Analysis and Applications}, vol.\ 22, no.\ 3, pp.~ 552-569,1968.

\bibitem{uniqueness} J.M. Lasry and P. L. Lions, ``Mean field games,'' \emph{Japanese Journal of Mathematics,} vol.\ 2, no.\ 1, pp.~ 229–260, Mar. 2007.

\bibitem{FBSM}S. Lenhart and J. T. Workman, \emph{Optimal control applied to biological models}, Chapman
\& Hall/CRC, Boca Raton, 2007.

\bibitem{treelike}A. B. Adcock, B. D. Sullivan, and M. W. Mahoney, ``Tree-Like Structure in Large Social and Information Networks,'' in \emph{Proc. IEEE International Conference on Data Mining}, Dallas, TX, USA, Dec. 2013, pp. 1-10.
\bibitem{hetero}Y. Moreno, R. Pastor-Satorras, and A. Vespignani, ``Epidemic outbreaks in complex heterogeneous networks,'' in \emph{European Physical Journal B - Condensed Matter and Complex Systems}, vol.\ 26, no. 4, pp.~521–529, Apr. 2002.
\bibitem{qqq}K. Chan, K. Marcus, L. Scott, and R. Hardy, ``Quality of information approach to improving source selection in tactical networks,'' in \emph{Proc. International Conference on Information Fusion (Fusion)}, Washington, DC, July 2015, pp.~ 566-573.
\bibitem{QoI2}C. Bisdikian, L. M. Kaplan,  and M. B. Srivastava, ``On the quality and value of information in sensor networks,'' in \emph{ACM Transactions on Sensor Networks (TOSN)}, vol.\ 9, no.\ 4, July 2013.  
\bibitem{QoC} J. Wang, J. Tang, D. Yang, E. Wang, and G. Xue, ``Quality-aware and fine-grained incentive mechanisms for mobile crowdsensing,'' in \emph{Proc. IEEE International Conference on Distributed Computing Systems (ICDCS)}, Nara, Japan, June 2016, pp. 354-363.



%
%
%








%
\end{thebibliography}
\end{document}